\documentclass[conference]{IEEEtran}
%


%

%
\usepackage{cite}

%
\ifCLASSINFOpdf
   \usepackage[pdftex]{graphicx}
   \DeclareGraphicsExtensions{.pdf,.jpeg,.png}
\else
   \usepackage[dvips]{graphicx}
   \DeclareGraphicsExtensions{.eps}
\fi
%
%

%
\usepackage[cmex10]{amsmath}
\hyphenation{op-tical net-works semi-conduc-tor}
\usepackage{bbm,amsthm}
\newtheorem{theorem}{Theorem}
\newtheorem{claim}{Claim}

\newtheorem{lemma}{Lemma}

\newtheorem{remark}{Remark}
\newtheorem{example}{Example}

\newcommand{\beqno}{ \begin{equation*} }
\newcommand{\eeqno}{ \end{equation*} }
\newcommand{\beq}{ \begin{equation} }
\newcommand{\eeq}{ \end{equation} }

\begin{document}
%
\title{Sum-networks from undirected graphs: construction and capacity analysis}

\author{\IEEEauthorblockN{Ardhendu Tripathy and Aditya Ramamoorthy}
\IEEEauthorblockA{Department of Electrical and Computer Engineering,
Iowa State University, Ames, Iowa 50011\\
Email: \{ardhendu,adityar\}@iastate.edu}
}


%


\maketitle

\begin{abstract}
We consider a directed acyclic network with multiple sources and multiple terminals where each terminal is interested in decoding the sum of independent sources generated at the source nodes. We describe a procedure whereby a simple undirected graph can be used to construct such a \textit{sum-network} and demonstrate an upper bound on its computation rate. Furthermore, we show sufficient conditions for the construction of a linear network code that achieves this upper bound. Our procedure allows us to construct sum-networks that have any arbitrary computation rate $\frac{p}{q}$ (where $p,q$ are non-negative integers). Our work significantly generalizes a previous approach for constructing sum-networks with arbitrary capacities. Specifically, we answer an open question in prior work by demonstrating sum-networks with significantly fewer number of sources and terminals.
\end{abstract}


%
\IEEEpeerreviewmaketitle

\section{Introduction}

Function computation using network coding is an area that has received significant attention in recent years (see for instance \cite{appuFKZ11, appuFKZ13, ramamoorthyL13, raiD12}). Broadly speaking, one considers directed acyclic networks with error free links, a set of source nodes that generate independent information and terminal nodes that demand a certain function of the source values. The topology of the network is allowed to be arbitrary. The most general formulation is evidently quite complex to study as depending on the function the demands can be arbitrarily complex and contain for instance multiple unicast as a special case, a problem which is well recognized to be hard (see for instance the discussion in \cite{huangR14, huangR12_TCOM}). Accordingly, several simplified settings have been considered in the literature. The work of \cite{appuFKZ11, appuFKZ13} considers general functions, but networks with only one terminal. A different line of work considers networks with multiple terminals that each need a simple function such as the sum \cite{ramamoorthyL13, raiD12}. Significant prior work \cite{kornerM79, orlitskyR01, doshiSMJ07} considers information theoretic issues in function computation where the sources are dependent but the network structures are simple in the sense that there are direct links between sources and the terminals.

In this work, we consider the problem of network coding for sum-networks. Specifically, the problem is one of multicasting the finite field sum of source values that are available at a set of source nodes to a designated set of terminals over a directed acyclic network, i.e., a sum-network. The topology of the graph denoting the network can be completely arbitrary under the trivial restriction that there exists a path between any terminal and each source node. We assume that the sources are independent and that the network links are delay and error-free but have finite capacity.

This problem was first considered in the work of \cite{ramamoorthy08}, where the class of networks with unit-entropy sources, unit-capacity edges and either two sources or two terminals was considered. For this class of networks it was demonstrated that as long as each source is connected to each terminals, computation of the sum was possible. In contrast, it was shown \cite{ramamoorthyL13} that there exist networks with three sources and three terminals where sum computation is impossible even though each source terminal pair is connected. Conditions on sum computation for networks with three sources and three terminals have also been investigated in prior work \cite{ramamoorthyL13,SD10}. More generally, one can define the notion of computation rate \cite{appuFKZ11}. Informally, a network code for a sum-network is said to have rate $r/l$ if in $l$ time slots, one can multicast the sum $r$ times to all the terminals. A network is called solvable if it has a $(r,r)$ code and not solvable otherwise. The problem of multicasting the sum can be shown to be equivalent to the problem of multiple unicast. Specifically, \cite{raiD12} shows that there is a linearly solvable equivalent sum-network for any multiple-unicast network. Thus characterizing the solvability of sum networks and identifying the network resources required in order to ensure solvability of a sum-network assumes importance.

There are several open problems for sum-networks where the number of sources and terminals is at least three. For instance, nontrivial sufficient conditions on the network resources that allow for sum computation are not known. However, recently certain impossibility results have been obtained. Reference \cite{rai13capacity} shows that for any integer $k \geq 2$, there exists a sum-network with three sources and four or more terminals (and also a sum-network with three terminals and four or more sources) with coding capacity $\frac{k}{k+1}$ for integers $k \geq 0$. Reference \cite{raiD13} is most closely related to our work. Given a ratio $p/q$ it constructs a sum-network that has capacity equal to $p/q$. In this work, we propose a construction of sum networks that significantly generalizes the work of \cite{raiD13} and answer some of its open questions.
\subsection{Main Contributions}
\noindent $\bullet~$Asssuming coprime $p$ and $q$, the work of \cite{raiD13}, constructs a sum-network that has $2q-1 + \binom{2q-1}{2}$ sources and $2q-1 + \binom{2q-1}{2}+1$ terminals, which can be significantly large if $q$ is large. An open question posed in \cite{raiD13} is whether sum-networks of smaller size exist that have a capacity $p/q$. In this work, we answer this in the affirmative. Specifically, we construct a large family of sum-networks that can be significantly smaller for several values of $p/q$ and recover their result as a special case. We note that examples of small sum-networks with capacity strictly smaller than one are useful in investigating sufficiency conditions for general networks. For example, \cite{ramamoorthyL13} demonstrates an instance of a network with three sources and three terminals where unit connectivity between each source terminal pair does not suffice, implying that a sufficiency condition for such 3-source, 3-terminal networks that only looks at minimum connectivity between source terminal pairs needs to consider networks where the minimum cut between each source terminal pair is at least two.\\
\noindent $\bullet~$ Our proof that the constructed sum-network has the appropriate capacity value is simpler than the proof of \cite{raiD13}.\\
This paper is organized as follows. The problem is formally posed in Section \ref{sec:setup}, our construction is explained in Section \ref{sec:constr} and a comparison with existing results appears in Section \ref{sec:comp}. We conclude the paper with a discussion about future work in Section \ref{sec:conclusions}.
\section{Problem formulation}
\label{sec:setup}
We consider communication over a directed acyclic graph (DAG) $G=(V,E)$ where $V$ is the  set of nodes and $E \in V \times V \times \mathbf{Z}_+$ are the edges denoting the delay-free communication links between them.
Subset $S \subset V$ denotes the source nodes and $T \subset V$ denotes the terminal nodes. The source nodes have no incoming edges and the terminal nodes have no outgoing edges. Each source node $s_i \in S$ generates an independent random process $X_i$, such that the sequence of random variables $X_{i1}, X_{i2}, \dots$ indexed by time are i.i.d. and each $X_{ij}$ takes values that are uniformly distributed over a finite alphabet $\mathcal{A}$ that is assumed to be a finite field such that $|\mathcal{A}| = q$. Each edge is of unit capacity and can transmit one symbol from $\mathcal{A}$ per unit time. Our model allows for multiple edges between nodes. In this case the edges are given an additional index. For instance if there are two edges between nodes $u$ and $v$, these will be represented as $(u,v,1)$ and $(u,v,2)$. The capacity of the edge $(u,v)$ is defined as the number of edges between $u$ and $v$.

We use the notation $\text{In}(v)$ to represent the set of incoming edges at node $v \in V$. We will also work with undirected graphs in this paper. If $v$ is a node in an undirected graph, then $\text{In}(v)$ will represent the edges incident on $v$.
%

An network code is an assignment of edge functions to each edge in $E$ and a decoding function to each terminal in $T$. The edge function for an edge connected to a source, depends only the source values. Likewise an edge function for an edge that is not connected to a sources depends on the values received on its incoming edges and the decoding function for a terminal depends only on its incoming edges. 
We let the source messages be vectors of length $r$ and the edge functions to be vectors of length $l$. The decoding functions should be such that each terminal recovers the sum of all the source message vectors. The domain and range of the encoding functions can be summarized as follows.
\begin{itemize}
\item Edge function for edge $e$. 
\begin{align*}
\phi_e &: \mathcal{A}^r \rightarrow \mathcal{A}^l ~~ \text{if tail}(e)\in S,  \\
\phi_e &: \mathcal{A}^{l|\text{In}(\text{tail}(e))|} \rightarrow \mathcal{A}^l ~~ \text{if tail}(e) \notin S.
\end{align*}

\item Decoding function for the terminal $t_i \in T$.
\begin{equation*}
\psi_{t_i} : \mathcal{A}^{l|\text{In}(t_i)|} \rightarrow \mathcal{A}^r
\end{equation*}
\end{itemize}

A network code is a linear network code if all the edge and decoding functions are linear. For the sum-networks that we consider, a $(r,l)$ fractional network code solution over $\mathcal{A}$ is such that the sum of $r$ source symbols (over the finite field) can be communicated to all the terminals in $l$ units of time. The rate of this network code is defined to be $r/l$. A network is said to be solvable if it has a $(r,r)$ network coding solution for some $r \geq 1$. A network is said to have a scalar solution if it has a $(1,1)$ solution. The supremum of all achievable rates is called the capacity of the network.

\section{Construction and capacity of a family of sum networks}
\label{sec:constr}
In this section we construct a family of sum-networks which are not solvable even though each source terminal pair is connected via at least one path. In fact, we will demonstrate that there exist families of sum-networks that are not solvable even though each source terminal pair has a minimum cut that is strictly larger than a fixed constant. The construction of the sum-network starts with a parallel set of $b \geq 2$ edges that we refer to as bottleneck edges in the subsequent discussion. Each bottleneck edge has a capacity $\alpha$. 
There are $\alpha$-capacity edges connecting carefully chosen subsets of the source nodes $S$ to the tail of each bottleneck edge. 
Similarly, $\alpha$-capacity edges connect the head of each bottleneck edge to a subset of terminal nodes $T$.
The choice of the various subsets of the source nodes is made via the help of
a undirected simple connected graph\footnote{A graph is said to be simple if does not have self loops and multiple edges between a pair of nodes.} $\widetilde{G}=(\widetilde{V},\widetilde{E})$ where $|\widetilde{V}|=b$.
We arbitrarily number the vertices in $\widetilde{G}$ as $1,2, \hdots , |\widetilde{V}|$. Then there are potentially the following three sets of sources in the sum-network $G$.
\begin{enumerate}
\item $S_1 = \{s_i : i = 1,2, \hdots, |\widetilde{V}|\}$,
\item $S_2 = \{s_e : e \in \widetilde{E}\}$, and
\item $s_ \star$.
\end{enumerate}
Similarly, there are three types of terminals in $G$.
\begin{enumerate}
\item $T_1=\{t_i : i=1,2, \hdots , |\widetilde{V}|\}$,
\item $T_2=\{t_e : e \in \widetilde{E}\}$, and
\item $t_ \star$,
\end{enumerate}
so that $T = T_1 \cup T_2 \cup \{t^\star\}$.
We propose two different constructions depending on whether we include $s_\star$ in the sum-network or not. 
In the constructions, we assume that $\widetilde{G}$ is connected and that $|\widetilde{E}| \geq |\widetilde{V}|$, i.e., it is not a tree.
\subsection{Construction 1}
For the first construction we do not include $s_\star$ in the network, thus $S = S_1 \cup S_2$. Let
\begin{equation}
\label{eq:sources}
X=\{X_1, X_2, \hdots , X_{|\widetilde{V}|}\} \cup \{X_e : e \in \widetilde{E}\}
\end{equation}
be the set of independent random processes generated at the respective source nodes in $G$.

Let $\text{In}_{\widetilde{G}}(i)$ denote the edges $e \in \widetilde{E}$ that are incident to the vertex $i \in \widetilde{V}$ in the simple graph $\widetilde{G}$ and
\begin{equation}
\label{eq:defn_a_i}
A_i = \{X_i\} \cup \{X_e : e \in \text{In}_{\widetilde{G}}(i)\}.
\end{equation}
We use $A_i^c$ to denote $X \setminus A_i$.

Our sum-network $G$ can be constructed as follows. We first include the source node set $S$, the terminal node set $T$ and the bottleneck edges $B=\{e_1,e_2, \hdots , e_b\}$ where $b = |\widetilde{V}|$ and each edge is of capacity $\alpha$. Next, we follow the construction algorithm below  where edges (each of capacity $\alpha$) are included.\\
\underline{\it Construction Algorithm 1}\\
\begin{enumerate} \itemsep 1em
\item   Edges from sources to bottlenecks.
        \begin{enumerate}
			\item $(s_i,\text{tail}(e_j))$ if $X_i \in A_j ~ \text{for all} ~i,j \in \{1,2,\hdots,b\}$,
			\item $(s_e,\text{tail}(e_j))$ if $X_e \in A_j ~ \text{for all} ~j \in \{1,2,\hdots,b\}$.
		\end{enumerate}

\item Edges from bottlenecks to terminals.
        \begin{enumerate}
		\item $(\text{head}(e_i),t_i)  ~ \text{for all}~ i=\{1,2,\hdots,b\}$,
		\item $(\text{head}(e_i),t_e)$ and $(\text{head}(e_j),t_e) ~ \text{for all}~ e=(i,j) \in \widetilde{E}$, and
        \item $(\text{head}(e_i),t_{\star}) ~ \text{for all~} i \in \{1,2,\hdots,b\}$.
        \end{enumerate}

\item Direct edges.
        \begin{enumerate}
        \item $(s_i,t_j)$ if $X_i \in A_j^c ~ \text{for all} ~i,j \in \{1,2,\hdots,b\}$ and $(s_e, t_j)$ if $X_e \in A_j^c$ for $j \in \{1,2,\hdots,b\}$.
        \item For all $e=(i,j) \in \widetilde{E}$, $(s_k,t_e)$ if $X_k \in A_i^c \cap A_j^c$ and $(s_{e'}, t_e)$ if $X_{e'} \in A_i^c \cap A_j^c$.
        \end{enumerate}

\end{enumerate}
Next, we illustrate an example of the above construction.
\begin{example}\label{eg:K4e}
 Let $\widetilde{G}$ be as shown in Figure \ref{fig:K4e}.
\begin{figure}[t]
\centering
\includegraphics[scale = 0.5]{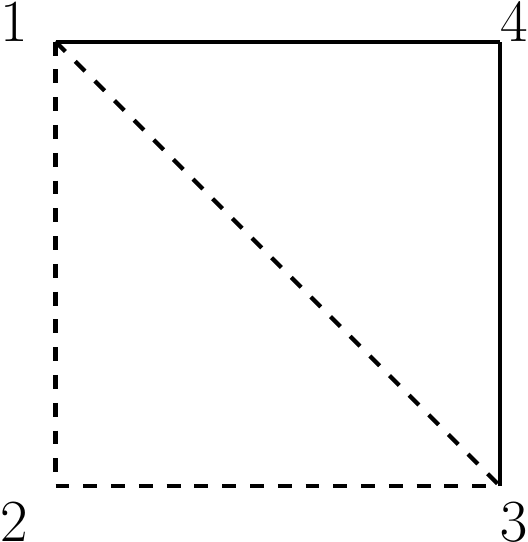}
\caption{ $\widetilde{G}$ for Example \ref{eg:K4e}. The dotted lines show the edge set $E_{\widetilde{Cyc}}$ for Construction 2.}
\label{fig:K4e}
\end{figure}
Then,
\begin{align*}
A_1 &= \{X_1,X_{(1,2)},X_{(1,3)},X_{(1,4)}\} \\
A_2 &= \{X_2,X_{(1,2)},X_{(2,3)}\} \\
A_3 &= \{X_3,X_{(1,3)},X_{(2,3)},X_{(3,4)}\} \\
A_4 &= \{X_4,X_{(1,4)},X_{(3,4)}\}
\end{align*}
The corresponding sum-network $G$ is shown in Figure \ref{fig:sum_network_K3}.
%
\begin{figure}
\centering
\includegraphics[scale=0.53]{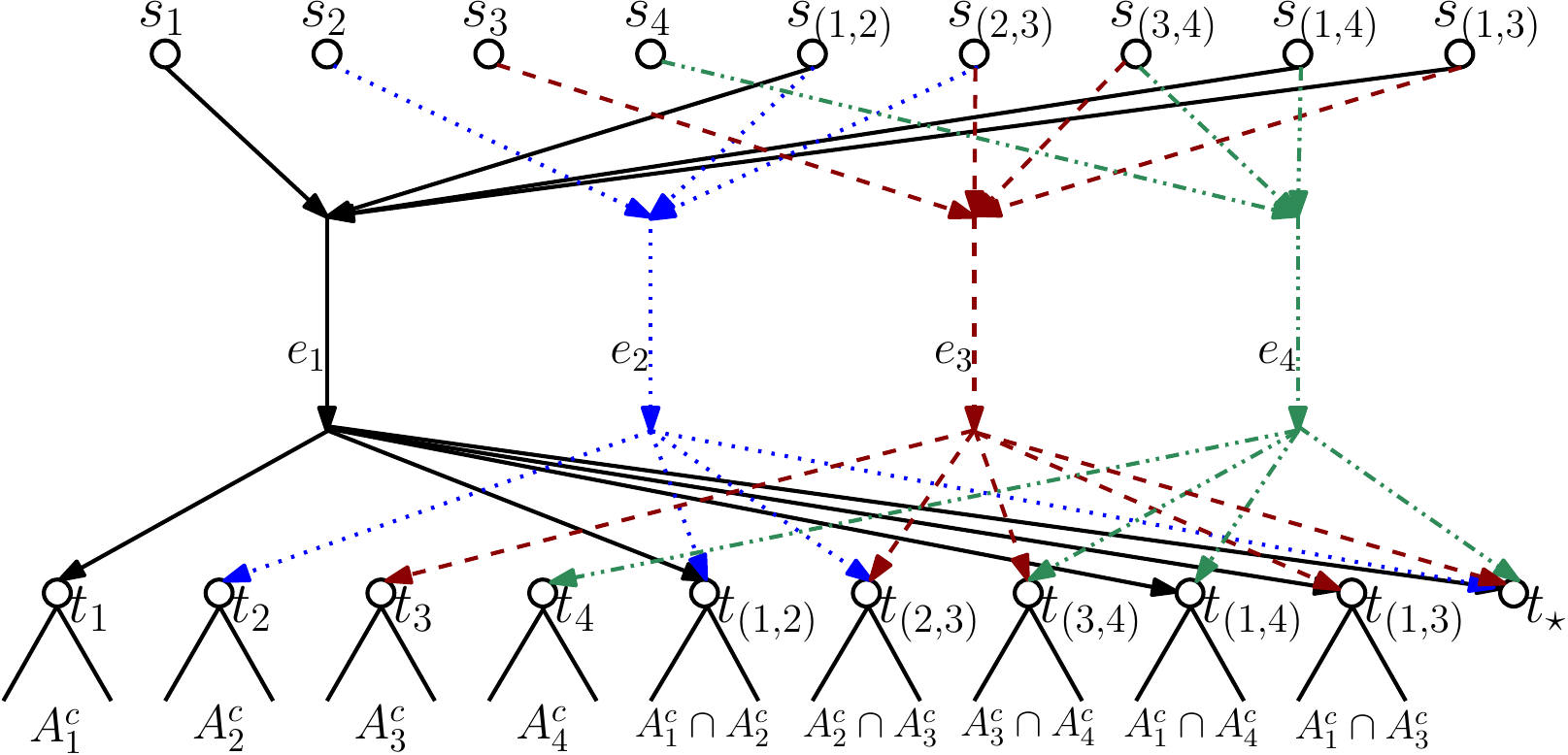}
\caption{Sum-network $G$ constructed from $\widetilde{G}$ in Fig. \ref{fig:K4e} via Construction 1. Edges $e_i, i = 1, \dots, 4$ represent the bottlenecks. The direct edges are specified by means of the set that appears below each terminal.}
\label{fig:sum_network_K3}
\end{figure}
\begin{figure}
\centering
\includegraphics[scale=0.53]{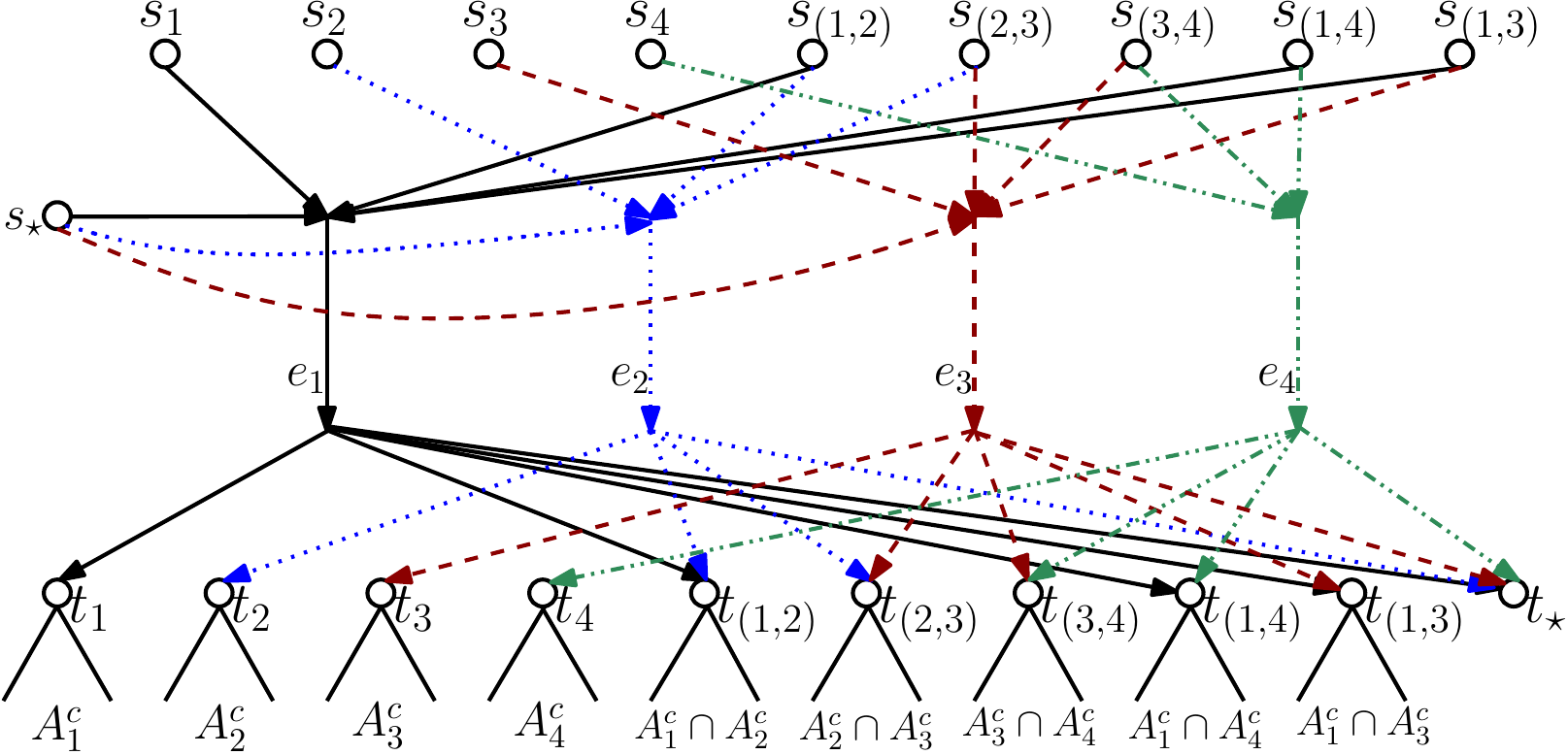}
\caption{Sum-network $G$ constructed from $\widetilde{G}$ in Fig. \ref{fig:K4e} via Construction 2 (with source $s_\star$). Edges $e_i, i = 1, \dots, 4$ represent the bottlenecks. The direct edges are specified by means of the set that appears below each terminal.}
\label{fig:sum_network_K3*}
\end{figure}
\end{example}

\subsection{Construction 2}
In the second construction, we include the source $s_\star$, i.e., $S = S_1 \cup S_2 \cup \{s_\star\}$, so that we have source $X_\star$ in addition to the sources listed in eq. (\ref{eq:sources}). Recall that we assumed that $\widetilde{G}$ is not a tree. Let $\widetilde{Cyc} = (V_{\widetilde{Cyc}}, E_{\widetilde{Cyc}})$ be a subgraph of $\widetilde{G}$ corresponding to the {\it shortest cycle} in $\widetilde{G}$; $\widetilde{Cyc}$ may not be unique. The following is a useful fact (proof appears in the Appendix).
\begin{claim}
\label{claim:shortest_cycle}
Suppose that nodes $i, j \in V_{\widetilde{Cyc}}$. Then either $(i,j) \in E_{\widetilde{Cyc}}$ or $(i,j) \notin \widetilde{E}$.
\end{claim}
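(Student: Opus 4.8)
The plan is to argue by contradiction, showing that the shortest cycle $\widetilde{Cyc}$ has no chord. Suppose, to the contrary, that $i,j \in V_{\widetilde{Cyc}}$ with $(i,j) \in \widetilde{E}$ but $(i,j) \notin E_{\widetilde{Cyc}}$. First I would use $i$ and $j$ to split $\widetilde{Cyc}$: since both lie on the cycle, $\widetilde{Cyc}$ is the edge-disjoint union of two paths $P_1$ and $P_2$ in $\widetilde{G}$, each having endpoints $i$ and $j$, internally vertex-disjoint, with $|E(P_1)| + |E(P_2)| = |E_{\widetilde{Cyc}}| =: g$.

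Next I would observe that $|E(P_1)|, |E(P_2)| \ge 2$. Indeed, if one of them, say $P_1$, had length $1$, then $P_1$ would be exactly the single edge $(i,j)$, forcing $(i,j) \in E_{\widetilde{Cyc}}$ and contradicting our assumption. Hence $g = |E(P_1)| + |E(P_2)| \ge 4$.

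Then I would form the two closed walks $C_1 := P_1 \cup \{(i,j)\}$ and $C_2 := P_2 \cup \{(i,j)\}$. Because $P_1$ is a path between $i$ and $j$ and the edge $(i,j)$ does not lie on $P_1$ (as $E(P_1) \subseteq E_{\widetilde{Cyc}}$ and $(i,j) \notin E_{\widetilde{Cyc}}$), $C_1$ is a simple cycle in $\widetilde{G}$, of length $|E(P_1)| + 1$. Since $|E(P_1)| \ge 2$ we get length at least $3$ (consistent with $\widetilde{G}$ being simple, hence of girth at least $3$), and since $|E(P_1)| = g - |E(P_2)| \le g - 2$ we get length at most $g - 1 < g$. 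This produces a cycle strictly shorter than $\widetilde{Cyc}$, contradicting the choice of $\widetilde{Cyc}$ as a shortest cycle of $\widetilde{G}$. Therefore no chord exists, i.e., whenever $i,j \in V_{\widetilde{Cyc}}$ and $(i,j) \in \widetilde{E}$ we must have $(i,j) \in E_{\widetilde{Cyc}}$, which is the desired dichotomy.

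I do not anticipate a serious obstacle here; the argument is a standard girth/chord manipulation. The only point deserving a little care is verifying that $C_1$ (and symmetrically $C_2$) is a genuine simple cycle of length at least $3$ rather than a degenerate closed walk, and this is exactly what the bounds $|E(P_1)|, |E(P_2)| \ge 2$ together with the internal vertex-disjointness of the two arcs of $\widetilde{Cyc}$ guarantee.
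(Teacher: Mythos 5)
Your proof is correct and takes essentially the same approach as the paper, which simply asserts that a chord $(i,j) \in \widetilde{E} \setminus E_{\widetilde{Cyc}}$ would yield a cycle shorter than $\widetilde{Cyc}$. You merely fill in the standard details (splitting the cycle into two arcs of length at least $2$ and closing the shorter one with the chord), which the paper leaves implicit.
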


The set of random processes $A_i$ is defined as follows.
\begin{equation*}
A_i = \begin{cases}
        \{X_i\} \cup \{X_e : e \in \text{In}_{\widetilde{G}}(i)\} \cup \{X_\star\} & \text{~if~} i \in V_{\widetilde{Cyc}},\\
        \{X_i\} \cup \{X_e : e \in \text{In}_{\widetilde{G}}(i)\}  & \text{~otherwise.}
\end{cases}
\end{equation*}
Following the definition of the sets $A_i$, the steps outlined in Construction 1 can be followed to construct most of the sum-network $G$. Following this, the additional source $s_\star$ is connected to each $e_i$ where $i \in V_{\widetilde{Cyc}}$. Each terminal in $T$ that does not have a path from $s_\star$ to it, is provided a direct edge $s_\star$ to it. This concludes the construction of $G$.
For an example, check Figure \ref{fig:sum_network_K3*} for a sum-network constructed from Figure \ref{fig:K4e} with the choice of $\widetilde{Cyc}$ as indicated in caption.
\subsection{Upper bound on the capacity of $G$}
We derive an upper bound on the capacity of $G$ assuming that we followed Construction 2 so that source $s_\star$ exists. The corresponding upper bound for Construction 1 follows in a similar manner.
Suppose that there exists a $(r,l)$ fractional network code assignment $\phi_e, e \in E$ and decoding functions $\psi_{t}, t \in T$ so that all the terminals in $T$ can recover the sum of sources denoted by $Z = \sum_{i \in \widetilde{V}}X_i + \sum_{e \in \widetilde{E}} X_e+ X_\star$. As such we have three types of sources, corresponding to (i) the vertices of $\widetilde{G}$, (ii) the edges of $\widetilde{G}$ and (iii) the starred source. We use a generic index to refer to all these types of sources. For instance, $\sum_{\beta \in A_k} X_\beta$ would equal $\sum_{i:X_i \in A_k, i \in \widetilde{V}}X_i + \sum_{e:X_e \in A_k, e \in \widetilde{E}} X_e+ \mathbbm{1}_{\{k \in V_{\widetilde{Cyc}}\}} X_\star$, where $\mathbbm{1}$ is the indicator function.
\begin{lemma}
\label{lemma:comp_1}
For terminal $t \in T$, if edge $(\text{head}(e_k),t)$ exists, then terminal $t$  can compute $\sum_{\beta \in A_k}X_\beta$ from the function value on $e_k$.
\end{lemma}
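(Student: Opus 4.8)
The plan is to prove something slightly stronger and terminal-independent: the $l$ symbols carried on the bottleneck edge $e_k$ by themselves determine the vector $\sum_{\beta \in A_k} X_\beta$. Once that is established the lemma is immediate, since any terminal $t$ with $(\text{head}(e_k),t) \in E$ receives exactly those symbols among its inputs and can apply the induced decoding map. The witness I would use is the terminal $t_k$, because Step 2(a) of Construction Algorithm~1 guarantees that $(\text{head}(e_k),t_k)$ always exists.

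First I would read off the in-edges of $t_k$ from the construction. Going through Steps 2 and 3 (and, for Construction~2, the final step attaching $s_\star$), one finds that $e_k$ is the only bottleneck edge entering $t_k$, and that the remaining in-edges of $t_k$ are precisely the direct edges $(s_\beta,t_k)$ over all $\beta$ with $X_\beta \in A_k^c$ (this includes $s_\star$ exactly when $X_\star \in A_k^c$, i.e.\ when $k \notin V_{\widetilde{Cyc}}$). I would also record the dual fact on the tail side: the only edges into $\text{tail}(e_k)$ come from the sources $s_\beta$ with $X_\beta \in A_k$, together with $s_\star$ in Construction~2 when $k \in V_{\widetilde{Cyc}}$ — but in that case $X_\star \in A_k$ as well. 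Hence the symbols transmitted on $e_k$ are a fixed function, say $f_k$, of the sources in $A_k$ only.

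Next I would invoke validity of the code. Since all terminals recover $Z = \sum_{\beta \in A_k} X_\beta + \sum_{\beta \in A_k^c} X_\beta$, the terminal $t_k$ in particular computes it by applying $\psi_{t_k}$ to the symbols on $e_k$ and to the directly received sources $\{X_\beta : \beta \in A_k^c\}$. Specialize to source realizations in which every $X_\beta$ with $\beta \in A_k^c$ is the all-zero message: then $Z = \sum_{\beta \in A_k} X_\beta$, and $t_k$'s only non-trivial input is the vector on $e_k$, so $\sum_{\beta \in A_k} X_\beta = g(\text{symbols on } e_k)$, where $g(\,\cdot\,) := \psi_{t_k}(\,\cdot\,,\mathbf{0})$. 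To pass to arbitrary realizations, note that changing the sources in $A_k^c$ leaves both the left-hand side and the argument $f_k(\{X_\beta\}_{\beta \in A_k})$ of $g$ untouched; hence $\sum_{\beta \in A_k} X_\beta = g(\text{symbols on } e_k)$ for every source realization. Finally, any terminal $t$ with $(\text{head}(e_k),t)\in E$ has the symbols on $e_k$ in hand and computes $\sum_{\beta \in A_k} X_\beta$ by applying $g$, which is the claim. (Construction~1 is the same argument with $s_\star$ and $X_\star$ deleted.)

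The step I expect to be the main obstacle is the ``spurious dependence'' passage: a priori the decoder of $t_k$ uses both $e_k$ and the directly received $A_k^c$-sources, and I must argue that the latter can be discarded. The content of this is entirely about which sources can influence $\text{tail}(e_k)$ — only those in $A_k$ — and this has to be extracted from the construction rather than from any linearity assumption on the code; a secondary annoyance is bookkeeping the source $s_\star$ and the extra direct edges of Construction~2 so that the two constructions are handled uniformly.
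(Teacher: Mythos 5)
Your proof is correct and follows essentially the same route as the paper's: both use $t_k$ as the witness terminal and rest on the two structural facts that $\phi_{e_k}$ depends only on the sources in $A_k$ while the remaining in-edges of $t_k$ depend only on those in $A_k^c$. The only difference is presentational --- you build the decoder $g$ explicitly by freezing the $A_k^c$ sources at zero and then noting the identity is insensitive to them, whereas the paper argues by contradiction with two source realizations that agree on $A_k^c$ and on $\phi_{e_k}$ but have different partial sums.
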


\begin{proof}
Consider terminal $t_k$ for $k \in \widetilde{V}$. It is connected to $\text{head}(e_k)$. By assumption, it is able to recover $Z$ from the information transmitted on its incoming edges. 
In the discussion below we let $\phi_{e_k}(X)$ refer to the value that the network code assigns on edge $e_k$, where we emphasize that $\phi_{e_k}(X)$ only depends on sources in the set $A_k$.
Suppose we cannot decode the sum $\sum_{\beta \in A_k}X_\beta$ from the value of $\phi_{e_k}(X)$.
This implies that we can find two different instantiations of source symbols $\mathbf{x}$ and $\mathbf{x}'$ such that
\begin{itemize}
\item $\phi_{e_k}(\mathbf{x})$  = $\phi_{e_k}(\mathbf{x}')$ but $\sum_{\beta \in A_k} \mathbf{x}_\beta \neq \sum_{\beta \in A_k} \mathbf{x}_\beta'$, and
\item $\mathbf{x}_\beta=\mathbf{x}_\beta'$ for $\beta \in A_k^c$.
\end{itemize}
Thus, even though $Z \neq Z'$, $\psi_{t_k}(\mathbf{x})$  = $\psi_{t_k}(\mathbf{x}')$ as $\phi_{e}(\mathbf{x}) = \phi_e(\mathbf{x}')$ for all $e  \in \text{In}(t_k)$. Thus, the terminal $t_k$ is unable to compute the sum which is a contradiction.
\end{proof}
In a similar manner the following lemma can be shown to hold.
\begin{lemma}
\label{lemma:comp_2}
Any terminal $t_e$ for $e=(i,j)\in \widetilde{E}$ is able to compute $\sum_{\beta \in A_i \cup A_j} X_\beta$ from the function values on edges $e_i$ and $e_j$ in $G$.
\end{lemma}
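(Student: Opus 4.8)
The plan is to mirror the argument of Lemma~\ref{lemma:comp_1}, but now exploiting the fact that by step~2(b) of Construction Algorithm~1 the terminal $t_e$ for $e=(i,j)\in\widetilde E$ is connected to \emph{both} $\text{head}(e_i)$ and $\text{head}(e_j)$, and by step~3(b) it additionally receives direct edges carrying every source $X_k$ (resp.\ $X_{e'}$) with $X_k\in A_i^c\cap A_j^c$. First I would observe that $A_i^c\cap A_j^c = X\setminus(A_i\cup A_j)$, so the direct edges into $t_e$ deliver, in uncoded form, exactly the sources in the complement of $A_i\cup A_j$. Hence $t_e$ knows $\sum_{\beta\notin A_i\cup A_j}X_\beta$ exactly, and since it can decode $Z=\sum_{\text{all }\beta}X_\beta$, it can form the difference $Z-\sum_{\beta\notin A_i\cup A_j}X_\beta=\sum_{\beta\in A_i\cup A_j}X_\beta$.

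The remaining point is to argue that this sum is in fact a function of the values $\phi_{e_i}(X)$ and $\phi_{e_j}(X)$ alone, i.e.\ that it does not secretly depend on the uncoded direct-edge inputs. I would do this by the same indistinguishability contradiction used in Lemma~\ref{lemma:comp_1}: suppose $\sum_{\beta\in A_i\cup A_j}X_\beta$ cannot be recovered from the pair $(\phi_{e_i}(X),\phi_{e_j}(X))$. Then there are two source instantiations $\mathbf{x},\mathbf{x}'$ with $\phi_{e_i}(\mathbf{x})=\phi_{e_i}(\mathbf{x}')$, $\phi_{e_j}(\mathbf{x})=\phi_{e_j}(\mathbf{x}')$, and $\mathbf{x}_\beta=\mathbf{x}'_\beta$ for every $\beta\in A_i^c\cap A_j^c$, yet $\sum_{\beta\in A_i\cup A_j}\mathbf{x}_\beta\neq\sum_{\beta\in A_i\cup A_j}\mathbf{x}'_\beta$. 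Because $\phi_{e_i},\phi_{e_j}$ depend only on sources in $A_i,A_j$ respectively, every edge carrying information derived from the bottlenecks $e_i,e_j$ into $t_e$ takes identical values on $\mathbf{x}$ and $\mathbf{x}'$; the direct edges into $t_e$ carry only coordinates in $A_i^c\cap A_j^c$, on which $\mathbf{x}$ and $\mathbf{x}'$ agree by construction. Thus all of $\text{In}(t_e)$ is identical for the two instantiations, so $\psi_{t_e}(\mathbf{x})=\psi_{t_e}(\mathbf{x}')$. But the two totals differ — $Z=\sum_{\beta\in A_i\cup A_j}\mathbf{x}_\beta+\sum_{\beta\in A_i^c\cap A_j^c}\mathbf{x}_\beta\neq\sum_{\beta\in A_i\cup A_j}\mathbf{x}'_\beta+\sum_{\beta\in A_i^c\cap A_j^c}\mathbf{x}'_\beta=Z'$ — contradicting correct decoding at $t_e$.

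One bookkeeping subtlety I would be careful about: I must check that the \emph{only} incoming edges of $t_e$ are the two bottleneck-head edges from $e_i,e_j$ and the direct edges from step~3(b), so that no additional dependence on sources outside $A_i^c\cap A_j^c$ sneaks in. Inspecting Construction Algorithm~1, terminals of type $T_2$ receive edges only in steps~2(b) and 3(b), so this holds; in Construction~2 the extra edge from $s_\star$ to $t_e$ (added only when no $s_\star$–$t_e$ path already exists) carries the coordinate $X_\star$, and one checks that $X_\star\in A_i^c\cap A_j^c$ precisely when neither $i$ nor $j$ lies in $V_{\widetilde{Cyc}}$, which is exactly the case in which that direct edge is present, so it is consistent with the accounting above. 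I expect the main obstacle to be this verification that the incoming-edge set of $t_e$ is exactly as claimed (equivalently, that $A_i^c\cap A_j^c$ really is the set of uncoded symbols available at $t_e$); once that is nailed down, the indistinguishability argument is a routine copy of Lemma~\ref{lemma:comp_1}.
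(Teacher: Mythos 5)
Your proposal is correct and takes essentially the same route the paper intends: the paper proves Lemma~\ref{lemma:comp_2} only by remarking that it follows ``in a similar manner'' to Lemma~\ref{lemma:comp_1}, and your indistinguishability argument --- two instantiations agreeing on $(\phi_{e_i},\phi_{e_j})$ and on $A_i^c\cap A_j^c$ but differing in $\sum_{\beta\in A_i\cup A_j}X_\beta$, hence in $Z$ --- is exactly that adaptation. The bookkeeping you add about the incoming-edge set of $t_e$ (including the $s_\star$ direct edge in Construction~2) is a correct and welcome explicit check of what the paper leaves implicit.
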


\begin{theorem}
\label{thm:upper_bd}
The rate of a fractional network coding solution for the sum-network $G$ constructed by the procedure above is upper bounded by $\frac{\alpha |\widetilde{V}|}{|\widetilde{E}|+|\widetilde{V}|+1}$.
\end{theorem}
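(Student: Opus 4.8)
The plan is to prove the bound at the information-theoretic level: show that the symbols transmitted on the $|\widetilde{V}|$ bottleneck edges during the $l$ time slots already determine \emph{the entire} source vector, and then read off the rate bound from a cardinality count. Write $U_k = \phi_{e_k}(\mathbf{x}) \in \mathcal{A}^{\alpha l}$ for the block of $\alpha l$ symbols carried on bottleneck $e_k$; by the way $\text{In}(\text{tail}(e_k))$ is built, $U_k$ is a function of the sources in $A_k$ only. Since there are exactly $|\widetilde{V}|$ bottlenecks, the tuple $(U_1, \dots, U_{|\widetilde{V}|})$ ranges over at most $|\mathcal{A}|^{\alpha l |\widetilde{V}|}$ values, whereas $X$ ranges over $|\mathcal{A}|^{r(|\widetilde{V}|+|\widetilde{E}|+1)}$ values (the sources are independent and uniform). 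Hence, once we show $\mathbf{x} \mapsto (U_1, \dots, U_{|\widetilde{V}|})$ is injective, we get $\alpha l |\widetilde{V}| \ge r(|\widetilde{V}|+|\widetilde{E}|+1)$, i.e.\ $r/l \le \frac{\alpha|\widetilde{V}|}{|\widetilde{E}|+|\widetilde{V}|+1}$, as claimed.

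The injectivity is where Lemmas~\ref{lemma:comp_1} and~\ref{lemma:comp_2} and Claim~\ref{claim:shortest_cycle} enter. By Lemma~\ref{lemma:comp_1}, $W_k := \sum_{\beta \in A_k} X_\beta$ is a function of $U_k$ alone; by Lemma~\ref{lemma:comp_2}, $\sum_{\beta \in A_i \cup A_j} X_\beta$ is a function of $(U_i, U_j)$ for every $(i,j) \in \widetilde{E}$. Using $\sum_{\beta \in A_i} X_\beta + \sum_{\beta \in A_j} X_\beta = \sum_{\beta \in A_i \cup A_j} X_\beta + \sum_{\beta \in A_i \cap A_j} X_\beta$, the quantity $\sum_{\beta \in A_i \cap A_j} X_\beta$ is therefore also a function of $(U_i, U_j)$. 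Since $\widetilde{G}$ is simple, the only edge incident to both $i$ and $j$ is $(i,j)$ itself, so $A_i \cap A_j = \{X_{(i,j)}\}$, except when both $i,j \in V_{\widetilde{Cyc}}$, where $A_i \cap A_j = \{X_{(i,j)}, X_\star\}$; by Claim~\ref{claim:shortest_cycle} this second case happens exactly for $(i,j) \in E_{\widetilde{Cyc}}$. Thus from the bottleneck symbols one recovers $X_e$ for each non-cycle edge $e$ and $X_e + X_\star$ for each cycle edge $e$. Substituting these into $W_i$ (and noting, again via Claim~\ref{claim:shortest_cycle}, that a cycle vertex has exactly its two cycle edges as incident edges among $V_{\widetilde{Cyc}}$, all of its other incident edges going to non-cycle vertices) one recovers $X_i$ for each non-cycle vertex $i$ and $X_i - X_\star$ for each cycle vertex $i$.

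It remains to pin down $X_\star$. The terminal $t_\star$ has no direct edges and its only in-neighbours are the bottleneck heads $\text{head}(e_1), \dots, \text{head}(e_{|\widetilde{V}|})$ (in Construction~2 it already has a path from $s_\star$ through any bottleneck indexed by $V_{\widetilde{Cyc}}$, so no direct edge from $s_\star$ is added), so $Z = \sum_{i\in\widetilde{V}} X_i + \sum_{e\in\widetilde{E}} X_e + X_\star$ is a function of $(U_1,\dots,U_{|\widetilde{V}|})$. Substituting the quantities already recovered, and writing $c = |V_{\widetilde{Cyc}}| = |E_{\widetilde{Cyc}}|$, the cycle vertices contribute $+c X_\star$ and the cycle edges contribute $-c X_\star$, so these cancel and leave the isolated $+X_\star$; hence $X_\star$, and then every remaining source, is determined. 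Equivalently, the $|\widetilde{V}|+|\widetilde{E}|+1$ linear functionals of $X$ produced above are linearly independent --- a one-line check inside the $(2c+1)$-dimensional span of $\{X_i : i\in V_{\widetilde{Cyc}}\}\cup\{X_e : e\in E_{\widetilde{Cyc}}\}\cup\{X_\star\}$. That establishes the injectivity and hence the theorem. I expect the only delicate point to be the bookkeeping for $X_\star$ just described, namely using Claim~\ref{claim:shortest_cycle} to guarantee $E_{\widetilde{Cyc}}$ is chordless so that the cycle terms in $Z$ cancel exactly; the corresponding bound for Construction~1 is the same argument with the $X_\star$, the cycle, and the ``$+1$'' all deleted.
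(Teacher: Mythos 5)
Your proposal is correct and follows essentially the same route as the paper's proof: it uses Lemmas~\ref{lemma:comp_1} and~\ref{lemma:comp_2} together with Claim~\ref{claim:shortest_cycle} to recover $X_e$ (or $X_e+X_\star$), then $X_k$ (or $X_k-X_\star$), then $X_\star$ via the cycle cancellation, and concludes with the same cardinality count $|\mathcal{A}|^{\alpha l|\widetilde{V}|}\ge|\mathcal{A}|^{r(|\widetilde{V}|+|\widetilde{E}|+1)}$. Framing this as injectivity of the source-to-bottleneck map rather than as a cut-set bound at $t_\star$ is only a cosmetic difference.
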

\begin{proof}
We assume that there is a valid network code such that each terminal is able to compute the sum $Z$.  
Consider a terminal $t_e$ where $e = (i,j) \in \widetilde{E}$. In this case $t_e$ can compute (from Lemma \ref{lemma:comp_1} and \ref{lemma:comp_2}), the following partial sums
\begin{itemize}
\item $\sum_{\beta \in A_i} X_\beta$.
\item $\sum_{\beta \in A_j} X_\beta$.
\item $\sum_{\beta \in A_i \cup A_j} X_\beta$.
\end{itemize}
Thus, it can compute $\sum_{\beta \in A_i} X_\beta + \sum_{\beta \in A_j} X_\beta - \sum_{\beta \in A_i \cup A_j} X_\beta$. Now, if $e$ participates in the cycle $\widetilde{Cyc}$ we obtain $X_e + X_\star$, which can be observed by noting that $A_i \cap A_j = \{X_e,X_\star\}$. It can also be seen that if edge $(i,j) \notin E_{\widetilde{Cyc}}$ then at least one of $i$ or $j$ do not participate in $\widetilde{Cyc}$ (\emph{cf.} Claim \ref{claim:shortest_cycle}). In this case the previous operation would simply provide $X_e$. 
Terminal $t_\star$ is connected to all the bottleneck edges. Thus, it can obtain the following symbols.
\begin{align*}
Y_{1e} &= X_e + X_\star~  \text{if}~ e \in E_{\widetilde{Cyc}}\\
Y_{1e} &= X_e ~ \text{if}~ e \notin E_{\widetilde{Cyc}}
\end{align*}
Following this step, $t_\star$ can compute $\sum_{\beta \in A_k} X_\beta - \sum_{e:X_e \in A_k} Y_{1e}$. Note that $A_k$ contains only one source corresponding to the vertices, namely $X_k$. Now, if $k \notin V_{\widetilde{Cyc}}$, then it is evident that $Y_{1e} = X_e$ for all $e \in \text{In}_{\widetilde{G}}(k)$, so that $\sum_{\beta \in A_k} X_\beta - \sum_{e:X_e \in A_k, e \in \widetilde{E}} Y_{1e} = X_k$. Alternatively, if $k \in V_{\widetilde{Cyc}}$, then $\sum_{\beta \in A_k} X_\beta - \sum_{e:X_e \in A_k, e \in \widetilde{E}} Y_{1e} = X_k - X_\star$. Thus at the end of this operation, $t_\star$ has the following symbols.
\begin{align*}
Y_{2k} &= X_k - X_\star~ \text{if $k \in V_{\widetilde{Cyc}}$}\\
Y_{2k} &= X_k ~ \text{if $k \notin V_{\widetilde{Cyc}}$}
\end{align*}
Note that a cycle has the same number of edges and nodes. Thus, $\sum_{k \in \widetilde{V}} Y_{2k} + \sum_{e \in \widetilde{E}} Y_{1e} = \sum_{i \in \widetilde{V}}X_i + \sum_{e \in \widetilde{E}} X_e$. However, terminal $t_\star$ already knows $Z$, thus it can compute $X_\star$ and consequently the value of all the sources.
%
%
%
%
Therefore, under the assumption that all the terminals can compute the sum of sources, we can conclude that $t^\star$ is able to compute all the individual source symbols present in the sum-network. 

It can be observed that the minimum cut between the set of all the sources and $t_\star$ is $\alpha |\widetilde{V}|$. Then under a valid fractional $(r,l)$-network code we must have
\begin{align}
\left(q^{l}\right)^{\alpha |\widetilde{V}|} &\geq \left( q^r \right) ^{|\widetilde{E}|+|\widetilde{V}|+1} \\
\implies \frac{r}{l} &\leq \frac{\alpha |\widetilde{V}|}{|\widetilde{E}|+|\widetilde{V}|+1}.
\label{UB1}
\end{align}
This concludes the proof.
\end{proof}
Thus for the sum network constructed in Example \ref{eg:K4e} we must have, for a $(r,l)$-network coding solution
\begin{equation}\label{eq:K3*_upper_bound}
\frac{r}{l} \leq \frac{4\alpha}{4+5+1} = \frac{2\alpha}{5}.
\end{equation}
\begin{remark}
It can be seen that following the line of proof above for the case of Construction 1 (without source $s_\star$), one arrives at a capacity upper bound of $\frac{\alpha |\widetilde{V}|}{|\widetilde{E}|+|\widetilde{V}|}$.
\end{remark}
\begin{remark}
In Construction 2, we chose to connect the starred source to only a carefully chosen subset of the bottleneck edges. If instead we had connected it to all the bottleneck edges (for instance), the starred terminal could only recover the source, under conditions on the characteristic of the field $\mathcal{A}$. This dependency on characteristic is evaluated and shown for an example sum-network in the Appendix. Our choice of the subset of bottleneck edges avoids this dependence on the field characteristic.

There is further work done in \cite{tripathyR15}, in which it is shown that the computation capacity (taking into account both linear and non-linear network codes) of a sum-network is strongly dependent on the finite field $\mathcal{A}$ chosen for computation and communication.
\end{remark}
%
%

\subsection{Achievability scheme for $G$}
For certain classes of undirected simple connected graphs, the corresponding sum-network is such that we can demonstrate a linear network coding scheme that achieves the upper bound in Theorem \ref{thm:upper_bd}. Towards this end, we first assign non-negative integers to variables $m_{(i,j)}(i)$ and $m_{(i,j)}(j)$, for all $(i,j) \in \widetilde{E}$, that satisfy certain constraints. The constraints depend on whether we use Construction 1 or 2 for constructing $G$ and are given below.\\
\underline{\it Feasible solution for Construction 1.}
\begin{align} \label{opt-1} \tag{CODE-FEAS-1}
\nonumber & m_{(i,j)}(i)+m_{(i,j)}(j)=b, ~~ \forall (i,j) \in \widetilde{E}. \\
\nonumber & \sum_{j : (i,j) \in \text{In}_{\widetilde{G}}(i)}m_{(i,j)}(i) \leq |\widetilde{E}|, ~~\forall i \in \widetilde{V}.
\end{align}
\underline{\it Feasible solution for Construction 2.}
\begin{align} \label{opt-2} \tag{CODE-FEAS-2}
\nonumber & m_{(i,j)}(i)+m_{(i,j)}(j)=b, ~~ \forall (i,j) \in \widetilde{E}. \\
\nonumber & \sum_{j : (i,j) \in \text{In}_{\widetilde{G}}(i)}m_{(i,j)}(i) \leq |\widetilde{E}|+1, ~~\forall i \in \widetilde{V}.\\
\nonumber & \sum_{i \in V_{\widetilde{Cyc}}}\left[ |\widetilde{E}|+1 -  \sum_{j : (i,j) \in \text{In}_{\widetilde{G}}(i)}m_{(i,j)}(i) \right] \geq b.
\end{align}
As will be evident shortly, the existence of the variables $m_{(i,j)}(i)$ allow us to construct the achievability scheme that matches the upper bound in Theorem \ref{thm:upper_bd}. Such feasible assignments do not exist for all graphs.
However, for a large class of graphs, we can in fact arrive at an assignment. For instance, for a simple, regular graph where $|\widetilde{V}|$ is even, it can be seen that $m_{(i,j)}(i) = |\widetilde{V}|/2, ~ \forall (i,j) \in \widetilde{E}$
results in a feasible assignment for \ref{opt-1}. The following claim can be shown (the proof appears in the Appendix).
\begin{claim}
\label{claim:feas}
If $\widetilde{G}$ is a regular graph or a biregular bipartite graph, then \ref{opt-1} has a feasible solution. 
\end{claim}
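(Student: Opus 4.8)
The plan is to first observe that the two families of constraints in \ref{opt-1} are, in a precise sense, tight, so that producing a feasible point amounts to hitting an equality. Summing the second constraint over all vertices $i \in \widetilde{V}$ and interchanging the order of summation gives
\[
\sum_{i \in \widetilde{V}} \ \sum_{j:(i,j)\in \text{In}_{\widetilde{G}}(i)} m_{(i,j)}(i) \ = \ \sum_{(i,j)\in \widetilde{E}} \bigl( m_{(i,j)}(i) + m_{(i,j)}(j) \bigr) \ = \ b\,|\widetilde{E}| \ = \ |\widetilde{V}|\,|\widetilde{E}|,
\]
using the first constraint $m_{(i,j)}(i)+m_{(i,j)}(j)=b$. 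Hence the $|\widetilde{V}|$ quantities $\sum_{j} m_{(i,j)}(i)$, each required to be at most $|\widetilde{E}|$, sum to exactly $|\widetilde{V}|\,|\widetilde{E}|$; therefore a feasible assignment exists if and only if one can choose non-negative integers with $m_{(i,j)}(i)+m_{(i,j)}(j)=b$ for every edge and with $\sum_{j:(i,j)\in \text{In}_{\widetilde{G}}(i)} m_{(i,j)}(i) = |\widetilde{E}|$ for \emph{every} vertex $i$. I will then exhibit such an assignment explicitly in each of the two cases.

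For the regular case, let $d$ be the common degree and $n=|\widetilde{V}|=b$, so $|\widetilde{E}|=nd/2$. If $n$ is even I take $m_{(i,j)}(i)=n/2$ for every edge and endpoint; the per-edge constraint holds and each vertex sum is $d\cdot(n/2)=|\widetilde{E}|$. If $n$ is odd, then $nd$ even forces $d$ even, so $\widetilde{G}$ admits an orientation in which every vertex has in-degree equal to out-degree equal to $d/2$ (orient along an Eulerian circuit of $\widetilde{G}$, which exists since $\widetilde{G}$ is connected with all degrees even). For an edge oriented from $i$ to $j$ I set $m_{(i,j)}(i)=(n-1)/2$ and $m_{(i,j)}(j)=(n+1)/2$: non-negative integers summing to $n=b$, with each vertex sum equal to $(d/2)(n-1)/2+(d/2)(n+1)/2=nd/2=|\widetilde{E}|$.

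For the biregular bipartite case, let the parts be $L$ and $R$ with common degrees $d_L$ on $L$ and $d_R$ on $R$, so $|\widetilde{E}|=|L|\,d_L=|R|\,d_R$ and $b=|L|+|R|$. Each edge has exactly one endpoint in $L$ and one in $R$; I assign the value $|L|$ to its endpoint in $L$ and the value $|R|=b-|L|$ to its endpoint in $R$. These are non-negative integers summing to $b$, the vertex sum at any $u\in L$ is $d_L\,|L|=|\widetilde{E}|$, and at any $v\in R$ it is $d_R\,|R|=|\widetilde{E}|$. In both cases the exact-equality condition of the first paragraph is met, which proves the claim. I expect the only non-routine point to be the odd-order regular case: the naive constant assignment $m_{(i,j)}(i)=b/2$ is not integral there, and one has to bring in the balanced (Eulerian) orientation; every other step is a direct verification.
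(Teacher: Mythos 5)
Your proof is correct and follows essentially the same route as the paper's: the constant assignment $b/2$ for even-order regular graphs, an Eulerian-circuit-based split of $\lfloor b/2\rfloor$ and $\lceil b/2\rceil$ for odd order (your balanced orientation is exactly what the paper's alternating assignment along the Euler tour produces), and the $n_l$/$n_r$ assignment for biregular bipartite graphs. Your opening observation that the inequality constraints must in fact be tight is a nice addition but does not change the substance of the argument.
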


\subsubsection{ Linear network code for Construction 1}
We construct a $(r,l)$ network code for the sum-network $G$ constructed above, by choosing $r=|\widetilde{V}|$ and $l=|\widetilde{V}|+|\widetilde{E}|$.
In the discussion below, we assume that $\alpha = 1$. However, it will be evident that the case of higher $\alpha$ is a simple extension of the scheme for $\alpha = 1$.
Here we describe the encoding function $\phi_{e_i}$ for bottleneck edge $e_i$, and assign the other edges to simply forward the messages that they receive.

We number the edges in $\widetilde{E}$ in an arbitrary order, so that the edges can be indexed as $\tilde{e}_1, \dots, \tilde{e}_{|\widetilde{E}|}$. Then the source processes can be represented as a vector $\mathbf{\bar{X}}$ of dimension $r(|\widetilde{V}| + |\widetilde{E}|) $ as shown below.
\begin{equation*}
\mathbf{\bar{X}}=[X_1^T ~ X_2^T ~\hdots~ X_{|\widetilde{V}|}^T ~X_{\tilde{e}_1}^T ~X_{\tilde{e}_2}^T ~\hdots ~X_{\tilde{e}_{|\widetilde{E}|}}^T ]^T
\end{equation*} where $X_i, i = 1, \dots, |\widetilde{V}|$, $X_{\tilde{e}_i}, i = 1, \dots, |\widetilde{E}|$  are each of dimension $r \times 1$.
Recall that $\text{tail}(e_i)$ is connected to the set of sources in $A_i$, thus $\phi_{e_i}(\mathbf{\bar{X}})$ is only a function of the sources in $A_i$. Moreover, $\text{head}(e_i)$ is connected to terminals $t_i$ and $t_e$, where $e \in \text{In}_{\widetilde{G}}(i)$ and also $t_\star$.
We partition the vector $\phi_{e_i}(\mathbf{\bar{X}})$ as follows.
\begin{align*}
 \phi_{e_i}(\mathbf{\bar{X}})^T = [ \phi_{e_i}(\mathbf{\bar{X}})^T_1 ~ \phi_{e_i}(\mathbf{\bar{X}})^T_{\hat{e}_1} ~\hdots~ \phi_{e_i}(\mathbf{\bar{X}})^T_{\hat{e}_{|\text{In}(i)|}} ]
\end{align*}
where $\phi_{e_i}(\mathbf{\bar{X}})_1$ is of dimension $r \times 1$. Furthermore, $\hat{e}_1, \dots, \hat{e}_{|\text{In}_{\widetilde{G}}(i)|} \in \text{In}_{\widetilde{G}}(i)$
and $\phi_{e_i}(\mathbf{\bar{X}})_{\hat{e}_j}$ is of dimension $m_{\hat{e}_j}(i) \times 1$ ($m_{\hat{e}_j}(i)$ is obtained from \ref{opt-1}). 
We set
\begin{align*}
\phi_{e_i}(\mathbf{\bar{X}})_1 = \sum_{\beta \in A_i} X_\beta
\end{align*}
%

Next, for $\hat{e}_j = (u,i) \in \text{In}_{\widetilde{G}}(i)$, we set
\begin{align*}
\phi_{e_i}(\mathbf{\bar{X}})_{\hat{e}_j} =
\begin{cases}
[I_{m_{\hat{e}_j}(i) \times m_{\hat{e}_j}(i)} ~~ \mathbf{0}] X_{(u,i)} & \text{~if $i < u$} \\
[\mathbf{0} ~~I_{m_{\hat{e}_j}(i) \times m_{\hat{e}_j}(i)}] X_{(u,i)} & \text{~otherwise.}
\end{cases}
\end{align*}
%

It is evident that the assignment for the encoding function $\phi_{e_i}(\mathbf{\bar{X}})$ discussed above is feasible since
\begin{enumerate}
\item it is a function only of sources in $A_i$, and
\item the row dimension of the transformation is \newline $r + \sum_{\hat{e}_j \in \text{In}(i)} m_{\hat{e}_j}(i) \leq |\widetilde{V}| + |\widetilde{E}|$.
\end{enumerate}
\begin{theorem}\label{thm:achiev}
The sum of sources Z, 
can be computed at all the terminals with the encoding functions specified above.
\end{theorem}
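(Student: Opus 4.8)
The plan is to show that each terminal in $G$ can recover the sum $Z = \sum_{i \in \widetilde{V}} X_i + \sum_{e \in \widetilde{E}} X_e$ under the linear code specified above, handling the three types of terminals $t_i$ ($i \in \widetilde{V}$), $t_e$ ($e \in \widetilde{E}$), and $t_\star$ in turn. The key observation driving everything is the decomposition of the message on bottleneck $e_i$ into the ``aggregate'' block $\phi_{e_i}(\mathbf{\bar{X}})_1 = \sum_{\beta \in A_i} X_\beta$ and the ``forwarding'' blocks $\phi_{e_i}(\mathbf{\bar{X}})_{\hat e_j}$, each of which passes along $m_{\hat e_j}(i)$ coordinates of the edge-source $X_{(u,i)}$ for $\hat e_j = (u,i)$. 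The constraint $m_{(i,j)}(i) + m_{(i,j)}(j) = b$ from \ref{opt-1} is precisely what guarantees that, across the two bottlenecks $e_i$ and $e_j$ incident to an edge $e = (i,j)$, all $b = |\widetilde{V}| = r$ coordinates of $X_e$ are transmitted (with the complementary index ranges $[I \ \mathbf{0}]$ versus $[\mathbf{0}\ I]$ ensuring the two halves overlap in exactly the right way to cover the whole vector), so that any node seeing both $e_i$ and $e_j$ can reconstruct $X_e$ in full.

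First I would handle $t_i$ for $i \in \widetilde{V}$: it receives $\phi_{e_i}(\mathbf{\bar{X}})$ on the bottleneck together with direct edges carrying $X_\beta$ for every $\beta$ with $X_\beta \in A_i^c$ (by Step 3(a) of Construction Algorithm 1). From the aggregate block it has $\sum_{\beta \in A_i} X_\beta$, and adding the directly-received $\sum_{\beta \in A_i^c} X_\beta$ yields $Z$. Next, for $t_e$ with $e = (i,j)$: it is connected to both $\text{head}(e_i)$ and $\text{head}(e_j)$, hence sees both aggregate blocks $\sum_{\beta \in A_i} X_\beta$ and $\sum_{\beta \in A_j} X_\beta$, and via the forwarding blocks it recovers $X_e$ in full (using the $m$-constraint as above); it also gets direct edges for all sources in $A_i^c \cap A_j^c$ by Step 3(b). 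Using inclusion–exclusion, $\sum_{\beta \in A_i} X_\beta + \sum_{\beta \in A_j} X_\beta - X_e = \sum_{\beta \in A_i \cup A_j} X_\beta$ (since $A_i \cap A_j = \{X_e\}$ in Construction 1), and adding the directly-received complement gives $Z$.

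The main work — and the step I expect to be the real obstacle — is $t_\star$, which is connected to all $b$ bottleneck heads and must assemble $Z$ from the $b$ aggregate blocks and whatever forwarding blocks it collects. It sees $\sum_{\beta \in A_i} X_\beta$ for all $i$; summing these counts each $X_i$ once but counts each edge-source $X_e$, $e = (i,j)$, exactly twice (once in $A_i$, once in $A_j$). So $t_\star$ needs to subtract off $\sum_{e \in \widetilde{E}} X_e$, i.e. it must reconstruct every edge-source $X_e$ from the forwarding blocks. For a fixed edge $e = (i,j)$, $t_\star$ receives $m_{(i,j)}(i)$ coordinates of $X_e$ from $e_i$ and $m_{(i,j)}(j)$ coordinates from $e_j$; since these sum to $b = r = \dim X_e$ and the index sets are the complementary prefixes/suffixes, $t_\star$ recovers $X_e$ completely. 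The remaining point to verify is that this is globally consistent: the forwarding capacity available on each bottleneck $e_i$ is $l - r = |\widetilde{E}|$ coordinates, and the second inequality of \ref{opt-1}, $\sum_{j : (i,j) \in \text{In}_{\widetilde{G}}(i)} m_{(i,j)}(i) \le |\widetilde{E}|$, is exactly what makes the partitioning of $\phi_{e_i}(\mathbf{\bar{X}})$ into $[\,\phi_{e_i}(\mathbf{\bar{X}})_1 \mid \phi_{e_i}(\mathbf{\bar{X}})_{\hat e_1} \mid \cdots\,]$ fit within dimension $l$. Hence $t_\star$ forms $\sum_{i \in \widetilde{V}} \big(\sum_{\beta \in A_i} X_\beta\big) - \sum_{e \in \widetilde{E}} X_e = \sum_{i \in \widetilde{V}} X_i + \sum_{e \in \widetilde{E}} X_e = Z$, completing the argument. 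Throughout, linearity and the fact that all non-bottleneck edges merely forward their inputs make each terminal's decoding function an explicit linear map, so no further solvability conditions are needed.
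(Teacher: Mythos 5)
Your proposal is correct and follows essentially the same route as the paper's proof: recover $\sum_{\beta\in A_i}X_\beta$ from the aggregate block at $t_i$, reassemble $X_e$ from the two complementary forwarding blocks at $t_e$ using $m_e(i)+m_e(j)=b=r$, and combine everything at $t_\star$. The only (immaterial) difference is at $t_\star$, where you sum all aggregate blocks and subtract $\sum_{e}X_e$ once, whereas the paper first isolates each $X_i=\phi_{e_i}(\mathbf{\bar{X}})_1-\sum_{e\in\text{In}_{\widetilde{G}}(i)}X_e$ and then sums; both are valid over any field.
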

\begin{proof}
The terminal $t_i, i = 1, \dots, |\widetilde{V}|$ is connected to $\text{head}(e_i)$ in $G$ for $i \in \{1, \dots, |\widetilde{V}|\}$. Therefore, from $\phi_{e_i}(\mathbf{\bar{X}})_1$ it recovers $\sum_{\beta \in A_i} X_\beta$. 
Furthermore, it has access to the set of sources in $A_i^c$ via direct edges. Thus, it can compute the sum.

Next, consider a terminal $t_e$ where $e = (i,j) \in \widetilde{E}$ such that $i < j$. From the assignment above it can be seen that $t_e$ can recover $X_e$ since
\begin{align*}
\begin{bmatrix}
\phi_{e_i}(\mathbf{\bar{X}})_e\\
\phi_{e_j}(\mathbf{\bar{X}})_e
\end{bmatrix}
= \begin{bmatrix}
[I_{m_{e}(i) \times m_{e}(i)} ~~ \mathbf{0}] X_{(i,j)} \\
[\mathbf{0} ~~ I_{m_{e}(j) \times m_{e}(j)}] X_{(i,j)}
\end{bmatrix}
= X_e,
\end{align*}
since $m_{e}(i) + m_{e}(j) = b = r$.
Next, note that $\sum_{\beta \in A_i} X_{\beta} + \sum_{\beta' \in A_j} X_{\beta'} - X_e = \sum_{\beta \in A_i \cup A_j} X_\beta$, since $A_i \cap A_j = X_e$. Moreover, $t_e$, gets all the sources $(A_i \cup A_j)^c$ via direct edges. Thus, it can compute the sum.

Finally, for $t_\star$, note that by previous arguments, it can recover any $X_{e}, ~ e \in |\widetilde{E}|$ from the bottleneck edges. Following this, it can recover
$X_i=\phi_{e_i}(\mathbf{\bar{X}})_1-\sum_{e \in \text{In}_{\widetilde{G}}(i)} X_e$ for all $i \in \widetilde{V}$ and consequently the sum.
\end{proof}
\begin{example}
\label{eg:achiev_for _K_3}
We describe the edge functions and decoding procedure for the case when $\widetilde{G}=K_3$, i.e. the complete graph on three vertices. We note that the following assignment satisfies the constraints in \ref{opt-1}.
\begin{align*}
m_{(1,2)}(1)=1 ~~& m_{(1,2)}(2)=2, \\
m_{(1,3)}(1)=2 ~~& m_{(1,3)}(3)=1, \text{~and} \\
m_{(2,3)}(2)=1 ~~& m_{(2,3)}(3)=2.
\end{align*}
The linear encoding functions are shown in Fig. \ref{fig:achiev_scheme}.
\end{example}

\begin{figure}[!t]
\begin{align*}
\mathbf{\bar{X}}=
\begin{bmatrix}
X_1^T & X_2^T & X_3^T & X_{(1,2)}^T & X_{(1,3)}^T  & X_{(2,3)}^T
\end{bmatrix}^T
\end{align*}
\begin{align*}
\phi_{e_1}(\mathbf{\bar{X}})&=
\left[ \begin{array}{c c c c c c   }
I_3 & \mathbf{0}_{3 \times 3} & \mathbf{0}_{3 \times 3} & I_3 & I_3 & \mathbf{0}_{3 \times 3}\\
\mathbf{0} & \mathbf{0} & \mathbf{0} & [1 ~\mathbf{0}] & \mathbf{0} & \mathbf{0}\\
\mathbf{0} & \mathbf{0} & \mathbf{0} & \mathbf{0} & [I_2 ~\mathbf{0}] & \mathbf{0}
\end{array}\right ]\mathbf{\bar{X}}
\end{align*}
\begin{align*}
\phi_{e_2}(\mathbf{\bar{X}})&=
\left[ \begin{array}{c c c c c c  }
\mathbf{0}_{3 \times 3} & I_3 & \mathbf{0}_{3 \times 3} & I_3 & \mathbf{0}_{3 \times 3} & I_3\\
 \mathbf{0} & \mathbf{0} & \mathbf{0} & [\mathbf{0}~I_2] & \mathbf{0} & \mathbf{0}\\
\mathbf{0} & \mathbf{0}  & \mathbf{0}  & \mathbf{0} & \mathbf{0} & [1 ~\mathbf{0}]
\end{array}\right ] \mathbf{\bar{X}}
\end{align*}
\begin{align*}
\phi_{e_3}(\mathbf{\bar{X}})&=
\left[ \begin{array}{c c c c c c  }
\mathbf{0}_{3 \times 3} & \mathbf{0}_{3 \times 3} & I_3 & \mathbf{0}_{3 \times 3} & I_3 & I_3 \\
\mathbf{0} & \mathbf{0} & \mathbf{0} & \mathbf{0} & [\mathbf{0} ~1]   & \mathbf{0}\\
\mathbf{0}  & \mathbf{0} & \mathbf{0} & \mathbf{0} & \mathbf{0} & [\mathbf{0} ~I_2]
\end{array}\right ] \mathbf{\bar{X}}
\end{align*}
\vspace*{4pt}
\caption{\label{fig:achiev_scheme} Encoding functions for the sum-network constructed from $K_3$ (see discussion in Example \ref{eg:achiev_for _K_3}). It can observed that $t_{1,2}$ can be satisfied since can recover the first component of $X_{(1,2)}$ from $\phi_{e_1}(\mathbf{\bar{X}})$ and the remaining components from $\phi_{e_2}(\mathbf{\bar{X}})$, following which it can recover $X_1 + X_2 + X_{(1,2)} + X_{(2,3)} + X_{(1,3)}$. It has $X_3$ available via a direct edge. In a similar fashion, it can be verified that all terminals can be satisfied.}
\end{figure}

\subsubsection{Linear network code for Construction 2}
For the second construction, we have one additional source $X_\star$ and we construct a $(r,l)$ network code for the sum-network with $r=|\widetilde{V}|$ and $l=|\widetilde{V}|+|\widetilde{E}|+1$. Similar to the previous case we index the edges so that the source processes can be stacked in a vector $\mathbf{\bar{X}}$ of dimension $r(|\widetilde{V}|+|\widetilde{E}|+1)$ as follows
\begin{equation*}
\mathbf{\bar{X}}=[X_1^T ~ X_2^T ~\hdots~ X_{|\widetilde{V}|}^T ~X_{\tilde{e}_1}^T ~X_{\tilde{e}_2}^T ~\hdots ~X_{\tilde{e}_{|\widetilde{E}|}}^T ~X_\star^T]^T
\end{equation*} where $X_i, i = 1, \dots, |\widetilde{V}|$, $X_{\tilde{e}_i}, i = 1, \dots, |\widetilde{E}|$ and $X_\star$  are each of dimension $r \times 1$.
The constraints satisfied by variables $m_{(i,j)}(i)$ is as outlined in \ref{opt-2}. Now, for $i \notin V_{\widetilde{Cyc}}$, the construction of $\phi_{e_i}(\mathbf{\bar{X}})$ is exactly the same as in the previous one.

For $i \in V_{\widetilde{Cyc}}$, the vector $\phi_{e_i}(\mathbf{\bar{X}})$ also contains information from source $s_\star$ as follows
\begin{align*}
 \phi_{e_i}(\mathbf{\bar{X}})^T = [ \phi_{e_i}(\mathbf{\bar{X}})^T_1 ~ \phi_{e_i}(\mathbf{\bar{X}})^T_{\hat{e}_1} ~\hdots~ \phi_{e_i}(\mathbf{\bar{X}})^T_{\hat{e}_{|\text{In}(i)|}} ~\phi_{e_i}(\mathbf{\bar{X}})_\star^T ]
\end{align*}
where $\phi_{e_i}(\mathbf{\bar{X}})_1$ is of dimension $r \times 1$. Also, $\hat{e}_1, \dots, \hat{e}_{|\text{In}_{\widetilde{G}}(i)|} \in \text{In}_{\widetilde{G}}(i)$
and $\phi_{e_i}(\mathbf{\bar{X}})_{\hat{e}_j}$ is of dimension $m_{\hat{e}_j}(i) \times 1$. $\phi_{e_i}(\mathbf{\bar{X}})_\star^T$ is of dimension $w_i \times 1$ where
\begin{align*}
w_i=|\widetilde{E}|+1-\sum_{j : (i,j) \in \text{In}_{\widetilde{G}}(i)}m_{(i,j)}(i).
\end{align*}
Note that $w_i \geq 0$ for $i \in V_{\widetilde{Cyc}}$ owing to the second constraint of \ref{opt-2}.

$\phi_{e_i}(\mathbf{\bar{X}})_1$ and $\phi_{e_i}(\mathbf{\bar{X}})_{\hat{e}_j} ~\text{for}~ \hat{e}_j \notin E_{\widetilde{Cyc}}$ are defined exactly as in previous construction. For $\hat{e}_j = (u,i) \in E_{\widetilde{Cyc}}$ there is a slight modification as follows.
\begin{align*}
\phi_{e_i}(\mathbf{\bar{X}})_{\hat{e}_j} =
\begin{cases}
[I_{m_{\hat{e}_j}(i) \times m_{\hat{e}_j}(i)} ~~ \mathbf{0}] (X_{(u,i)}+X_\star) & \text{~if $i < u$,} \\
[\mathbf{0} ~~I_{m_{\hat{e}_j}(i) \times m_{\hat{e}_j}(i)}] (X_{(u,i)}+X_\star) & \text{~otherwise.}
\end{cases}
\end{align*}

We let $\gamma_{w_i}= \sum_{\{j: j\in V_{\widetilde{Cyc}}, j < i\}} w_j$ and $\gamma_{w_i}'= \sum_{\{j: j\in V_{\widetilde{Cyc}}, j > i\}} w_j$. Then, $\phi_{e_i}(\mathbf{\bar{X}})_\star$ is defined as
\begin{align}
\label{eq:starred_source_recovery}
& \phi_{e_i}(\mathbf{\bar{X}})_\star =
\begin{bmatrix}
\mathbf{0}_{w_i \times r(|\widetilde{E}|+|\widetilde{V}|)} & \mathbf{0}_{w_i \times \gamma_{w_i}} & I_{w_i} & \mathbf{0}_{w_i \times \gamma_{w_i}'}
\end{bmatrix}
\mathbf{\bar{X}}
\end{align}

Such an assignment for the encoding function $\phi_{e_i}(\mathbf{\bar{X}})$ for $i \in V_{\widetilde{Cyc}}$ is feasible as
\begin{enumerate}
\item $\text{tail}(e_i)$ is connected to the starred source $s_\star$, and
\item the row dimension of the transformation is
\begin{align*}
& r + \sum_{\hat{e}_j \in \text{In}_{\widetilde{G}}(i)} m_{\hat{e}_j}(i)+ w_i \\ & =r + \sum_{\hat{e}_j \in \text{In}(i)} m_{\hat{e}_j}(i)+(|\widetilde{E}|+1-\sum_{j : (i,j) \in \text{In}_{\widetilde{G}}(i)}m_{(i,j)}(i))  \\ & =|\widetilde{E}|+|\widetilde{V}|+1.
\end{align*}
\end{enumerate}

\begin{theorem}\label{thm:achiev_starred}
The sum of sources Z, can be computed at all the terminals with the encoding functions specified above.
\end{theorem}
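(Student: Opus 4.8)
The plan is to verify that each of the three types of terminals --- $t_i$ for $i \in \widetilde{V}$, $t_e$ for $e \in \widetilde{E}$, and $t_\star$ --- can recover $Z = \sum_{i \in \widetilde{V}} X_i + \sum_{e \in \widetilde{E}} X_e + X_\star$ under the $(r,l)$ code just specified, largely by mirroring the argument of Theorem \ref{thm:achiev} and carefully tracking the extra $X_\star$ contributions. First I would handle $t_i$: it is connected to $\text{head}(e_i)$, hence receives $\phi_{e_i}(\mathbf{\bar X})_1 = \sum_{\beta \in A_i} X_\beta$ (which by the definition of $A_i$ already includes $X_\star$ when $i \in V_{\widetilde{Cyc}}$), and it receives every source in $A_i^c$ via direct edges; since by construction every terminal lacking a path from $s_\star$ is given a direct edge from $s_\star$, one checks that $t_i$ either gets $X_\star$ inside the bottleneck sum or via a direct edge, so in all cases $t_i$ can form $Z$.

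Next I would treat $t_e$ for $e = (i,j) \in \widetilde{E}$ with $i<j$. As in Theorem \ref{thm:achiev}, $t_e$ is connected to $\text{head}(e_i)$ and $\text{head}(e_j)$ and reads off the blocks $\phi_{e_i}(\mathbf{\bar X})_e$ and $\phi_{e_j}(\mathbf{\bar X})_e$. If $e \notin E_{\widetilde{Cyc}}$ these blocks are the same complementary pieces of $X_{(i,j)}$ as before, so $t_e$ recovers $X_e$ exactly; if $e \in E_{\widetilde{Cyc}}$ they are complementary pieces of $X_{(i,j)} + X_\star$, so $t_e$ recovers $X_e + X_\star$. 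In either case, combining with $\phi_{e_i}(\mathbf{\bar X})_1 = \sum_{\beta \in A_i} X_\beta$ and $\phi_{e_j}(\mathbf{\bar X})_1 = \sum_{\beta \in A_j} X_\beta$, and using $A_i \cap A_j = \{X_e\}$ (resp. $\{X_e, X_\star\}$), one obtains $\sum_{\beta \in A_i \cup A_j} X_\beta$; the remaining sources $(A_i \cup A_j)^c$ arrive on direct edges (again including $X_\star$ when needed), so $t_e$ forms $Z$. I would write the $e \in E_{\widetilde{Cyc}}$ case out as a $2\times 1$ block identity analogous to the display in the proof of Theorem \ref{thm:achiev}.

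The substantive part, and the main obstacle, is $t_\star$. It is connected to every $\text{head}(e_i)$. From the $e$-blocks of the bottleneck messages it first recovers, for each $e \in \widetilde{E}$, either $X_e$ (if $e \notin E_{\widetilde{Cyc}}$) or $X_e + X_\star$ (if $e \in E_{\widetilde{Cyc}}$). From $\phi_{e_i}(\mathbf{\bar X})_1 = \sum_{\beta \in A_i} X_\beta$, subtracting the recovered edge-quantities for $e \in \text{In}_{\widetilde G}(i)$, it gets $X_i$ when $i \notin V_{\widetilde{Cyc}}$ and $X_i - (\deg_{\widetilde{Cyc}}(i)-1)X_\star = X_i - X_\star$ when $i \in V_{\widetilde{Cyc}}$, since a vertex on the shortest cycle has exactly two incident cycle-edges by Claim \ref{claim:shortest_cycle}. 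At this stage the key point is that $X_\star$ itself is still not isolated --- the quantities $X_i - X_\star$ and $X_e + X_\star$ all carry it. This is precisely what the dedicated blocks $\phi_{e_i}(\mathbf{\bar X})_\star$ in \eqref{eq:starred_source_recovery} are for: the nonzero block positions are chosen, via the offsets $\gamma_{w_i}, \gamma_{w_i}'$, so that across all $i \in V_{\widetilde{Cyc}}$ these blocks jointly carry $\sum_{i\in V_{\widetilde{Cyc}}} w_i$ coordinates of the length-$r$ vector $X_\star$, and the third constraint of \ref{opt-2}, $\sum_{i \in V_{\widetilde{Cyc}}}\!\big(|\widetilde{E}|+1-\sum_{j}m_{(i,j)}(i)\big) \ge b = r$, guarantees this is at least $r$, so $X_\star$ is fully recovered. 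I would verify that the index ranges in \eqref{eq:starred_source_recovery} for distinct cycle vertices are disjoint and cover $\{1,\dots,r\}$ (it suffices that they cover a size-$r$ subset, truncating if the sum exceeds $r$). Once $X_\star$ is known, $t_\star$ corrects every $X_i - X_\star$ and $X_e + X_\star$ to the individual sources, and hence computes $Z$; in fact, as noted after Theorem \ref{thm:upper_bd}, it recovers all sources. Finally I would remark that the row-dimension bookkeeping needed for feasibility was already checked just before the theorem statement, so the code is well defined, completing the proof.
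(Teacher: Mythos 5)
Your proposal is correct and follows essentially the same route as the paper's proof: verify each terminal type in turn, recover $X_e+X_\star$ for cycle edges from the complementary blocks, extract $X_\star$ at $t_\star$ from the dedicated $\phi_{e_i}(\mathbf{\bar{X}})_\star$ blocks using the third constraint of \ref{opt-2}, and then correct the remaining partial sums. Your added observations (the intermediate quantities $X_i - X_\star$ for $i \in V_{\widetilde{Cyc}}$, and the need to truncate the $\star$-blocks when $\sum_{i \in V_{\widetilde{Cyc}}} w_i$ strictly exceeds $r$) are consistent with, and slightly more careful than, the paper's own write-up.
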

\begin{proof}
For terminals other than $t_\star$ and $t_e$ where $e \in E_{\widetilde{Cyc}}$, the sum can be computed in the same fashion as described in Theorem \ref{thm:achiev}.

Consider terminal $t_e$ where $e=(i,j) \in E_{\widetilde{Cyc}}$ such that $i<j$. $t_e$ is connected to $\text{head}(e_i)$ and $\text{head}(e_j)$. $t_e$ can recover the sum $X_e+X_\star$ by
\begin{align*}
X_e+X_\star=
\begin{bmatrix}
\phi_{e_i}(\mathbf{\bar{X}})_e\\
\phi_{e_j}(\mathbf{\bar{X}})_e
\end{bmatrix}.
\end{align*}
Note that $\sum_{\beta \in A_i} X_{\beta} + \sum_{\beta' \in A_j} X_{\beta'} - (X_e+X_\star) = \sum_{\beta \in A_i \cup A_j} X_\beta$, since $A_i \cap A_j = \{X_e,X_\star \}$. Moreover, $t_e$, gets all the sources $(A_i \cup A_j)^c$ via direct edges. Thus, it can compute the sum.

Terminal $t_\star$ can obtain $X_\star$ as follows.
\begin{align*}
X_\star^T=
\begin{bmatrix}
\phi_{e_{i_1}}(\mathbf{\bar{X}})_\star^T &
\phi_{e_{i_2}}(\mathbf{\bar{X}})_\star^T &
\hdots &
\phi_{e_{i_{|V_{\widetilde{Cyc}}|}}}(\mathbf{\bar{X}})_\star^T
\end{bmatrix},
\end{align*}
where $i_1 < i_2 < \hdots < i_{|V_{\widetilde{cyc}}|}$ all belong to $V_{\widetilde{Cyc}}$. This is because $\sum_{i \in V_{\widetilde{Cyc}}} w_i = \sum_{i \in V_{\widetilde{Cyc}}} \left[|\widetilde{E}|+1-\sum_{j : (i,j) \in \text{In}_{\widetilde{G}}(i)}m_{(i,j)}(i) \right] \geq b$ owing to the constraints in \ref{opt-2}. Moreover, eq. (\ref{eq:starred_source_recovery}) shows that the different bottleneck edges recover all the disjoint subsets of $X_\star$. 

For $e \notin E_{\widetilde{Cyc}}$, $t_\star$ can recover $X_e$ in the same way as $t_e$. For $e=(i,j) \in E_{\widetilde{Cyc}}~i<j$, $t_\star$ can recover $X_e $ as
\begin{align*}
\begin{bmatrix}
\phi_{e_i}(\mathbf{\bar{X}})_e\\
\phi_{e_j}(\mathbf{\bar{X}})_e
\end{bmatrix}
-X_\star
=X_e+X_\star-X_\star
\end{align*}
Thus, $t_\star$ can recover any $X_e, e\in \widetilde{E}$. Following this it can be seen that it can also recover $X_i$ for all $i \in \widetilde{V}$ and then the sum.
\end{proof}

The linear network code described here assumed edges of unit capacity. The same scheme can be used in sum-networks where all the edges have integer edge capacity $\alpha > 1$. In order to see this, we think of every $\alpha$ capacity edge as a union of $\alpha$ unit capacity edges between the same two vertices. Then we can think of this modified network as a union of $\alpha$ sub-networks, each of which is topologically equivalent to the original network but consists of only unit capacity edges. We can use the linear network code described above to achieve the coding capacity $\frac{r}{l}$ on each of these unit edge capacity networks. Thus, at the very least, we will be able to transmit a sum of sources $\in \mathcal{A}^{\alpha r}$ to all the terminals by a repeated application of the same network code on all $\alpha$ sub-networks. Thus, this will be a rate $\frac{\alpha r}{l}$ solution to the original sum-network. However, as evinced by Theorem \ref{thm:upper_bd} this is also the upper bound on the rate and hence this repeated application of our linear network code achieves capacity.

\section{Comparison with existing results}
\label{sec:comp}
The work most closely related to ours is by Rai \& Das \cite{raiD13}. They showed that there exists a sum-network that has coding capacity $\frac{p}{q}$ for any $p,q$ and constructed a linear code that achieves capacity on an instance of such a sum-network. They also pointed out that finding sum-networks that have the same coding capacity but with fewer sources and terminals was an open problem. We now demonstrate that our approach is a strict generalization of the work of \cite{raiD13}.

Their construction for a network with capacity $p/q$ starts by constructing a base network that has a capacity of $1/q$. The base network only has unit capacity edges. The edges are replicated $p$ times to obtain a network with capacity $p/q$. We now show that our approach gives their result as a special case when $\widetilde{G}$ is chosen to be the complete graph on $2q-1$ vertices.
\begin{example}
Consider $\widetilde{G}$ to be a complete graph on $2q-1$ vertices. This graph has a feasible assignment to constraints in \ref{opt-1} by Claim \ref{claim:feas}. Construct the sum-network $G$ from $\widetilde{G}$ without using the extra source $s_\star$. Then, by using Theorems \ref{thm:upper_bd} and \ref{thm:achiev} the coding capacity of $G$ is $\frac{2q-1}{2q-1+\binom{2q-1}{2}} = \frac{1}{q}$. By replicating each edge $p$ times, we obtain a coding rate of $p/q$.
\end{example}
However, it is important to note that our framework allows us more parameters, which can be chosen to obtain stronger results in the sense that specific coding rates can be achieved by using fewer sources and terminals.
\begin{example}
Suppose that the approach of \cite{raiD13} was used to construct a sum-network with capacity $2/5$. It can be verified that this requires a sum-network with $45$ sources and $46$ terminals. In contrast, in our approach we can choose $\widetilde{G}$ to be the graph in Fig. \ref{fig:K4e} (a feasible assignment for \ref{opt-2} can be easily derived). In this case, we obtain a sum-network $G$ (with source $s_\star$) with capacity $ \frac{4}{4 + 5 +1} = \frac{2}{5}$. However, our sum-network only has $9$ sources and $10$ terminals.
\end{example}

\begin{figure}
\centering
\includegraphics[scale=0.4]{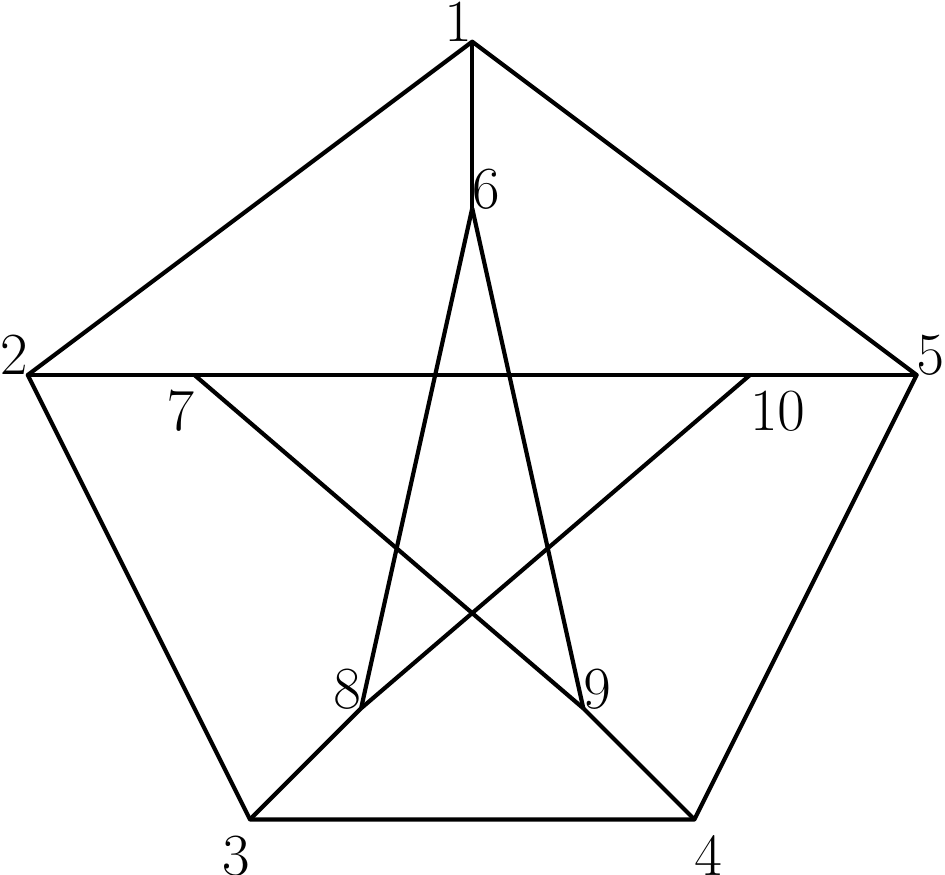}
\caption{Petersen graph.}
\label{fig:peter}
\end{figure}
As another example, we can construct a network with capacity $5/13$ that is significantly smaller.
\begin{example}
Let $\widetilde{G}$ be the Petersen graph \cite{diestel}, which is a 3-regular graph on 10 vertices (see Figure \ref{fig:peter}). Construct sum-network $G$ (with source $s_\star$). For this, choose the shortest cycle as $V_{\widetilde{Cyc}}=\{1,2,3,4,5\}$. The assignment for the variables in \ref{opt-2} can be done as follows
\begin{align*}
m_{(i,j)}(i) &=
\begin{cases}
4 & \text{~if $i \in V_{\widetilde{Cyc}}$ and $j \notin V_{\widetilde{Cyc}}$}, \\
6 & \text{~if $i \notin V_{\widetilde{Cyc}}$ and $j \in V_{\widetilde{Cyc}}$, and}  \\
5 & \text{~otherwise.}
\end{cases}\\
w_i &= 2 ~~\text{for all}~ i \in V_{\widetilde{Cyc}}.
\end{align*}
The coding capacity of $G$ can be verified to be $\frac{10}{10+\frac{3 \times 10}{2}+1} = \frac{5}{13}$ and it has 26 sources and 26 terminals. However, the approach of \cite{raiD13} would need $2 \times 13 -1 + \binom{2 \times 13 -1}{2}=325$ sources and $326$ terminals.
\end{example}
To a certain extent we can analyze the parameters that can be obtained by considering regular graphs.
Let $\widetilde{G}$ be a $d$-regular graph on $b$ vertices, hence $d \leq b-1$.
Then the upper bounds that we have on the coding rate for Constructions 1 and 2 respectively are
\begin{equation}
\frac{r}{l} \leq \frac{\alpha b}{b+\frac{db}{2}} = \frac{2 \alpha}{2+d},
\label{W1}
\end{equation}
and
\begin{equation}
\frac{r}{l} \leq \frac{\alpha b}{b+\frac{db}{2}+1}.
\label{W2}
\end{equation}
Without loss of generality consider $p/q$ where $p$ and $q$ are coprime.
It can be seen that \eqref{W1} can achieve any such $p/q$ by setting $\alpha = p$ and $d= 2q-2$. The resulting $G$ is a $2q-2$-regular graph on atleast $2q-1$ vertices, in which case it is just the complete graph on $2q-1$ vertices. This recovers the result of \cite{raiD13}.

We can also choose $\widetilde{G}$ to be a biregular bipartite graph with $n_l$ vertices of degree $d_l$ each in one part and $n_r$ vertices of degree $d_r$ each in the other part. Without the source $s_\star$, the coding capacity for a network constructed from such a graph is
\begin{equation}
\frac{n_l+n_r}{n_l+n_r+\frac{n_ld_l+n_rd_r}{2}} = \frac{2}{2+\frac{n_ld_l+n_rd_r}{n_l+n_r}}= \frac{2}{2+d_{\text{av}}}
\end{equation}
where $d_{\text{av}}$ is the average degree of the graph. Here $d_{\text{av}}$ is not necessarily an integer and thus opens up more possibilities of coding capacities that can be achieved from this graph construction as opposed to \eqref{W1} where $d$ has to be an integer. 
\begin{example}
Consider $\widetilde{G}$ to be the complete bipartite graph $K_{3,5}$. The sum-network $G$ constructed from this graph has coding capacity $\frac{2}{2+\frac{3 \times 5+5 \times 3}{8}} = \frac{8}{23}$
and has 23 sources and $24$ terminals. The approach of \cite{raiD13} would require $45 + \binom{45}{2}=1035$ sources and 1036 terminals.
\end{example}
It can also be seen that for any specified minimum cut between source terminal pairs, we can always choose an appropriate regular graph $\widetilde{G}$ so that the corresponding sum network has a capacity strictly smaller than one. Thus, it can be inferred that any min-cut ($= \alpha \times b$) between each source-terminal pair can never guarantee solvability.


\section{Conclusions and future work}
\label{sec:conclusions}
In this paper, we have constructed a large class of sum-networks for which we can determine the capacity. These sum-networks are in general, smaller (with fewer sources and terminals) than sum-networks known to achieve the said capacity and answer a question raised in prior work. The construction of these sum-networks with the help of undirected graphs allows us to identify certain graph-theoretic properties that aid in constructing a capacity-achieving linear network code. Future work will involve analyzing these properties in greater detail and also examining whether there are other combinatorial structures that can be used to construct sum-networks. 
\section*{Appendix}
\noindent {\it Proof of Claim \ref{claim:shortest_cycle}}.\\
We only need to rule out the possibility that $(i,j) \notin E_{\widetilde{Cyc}}$, but $(i,j) \in \widetilde{E}$. But this is impossible, since if $(i,j) \in \widetilde{E}$, we could find a shorter cycle in $\widetilde{G}$ than $\widetilde{Cyc}$.

\noindent {\it Proof of Claim \ref{claim:feas}}.\\
Suppose $\widetilde{G}$ is a $k$-regular graph on $b$ vertices. 
If $b$ is even, then assigning
\begin{equation}
m_{(i,j)}(i)= \frac{b}{2} \text{~and~} m_{(i,j)}(j)= \frac{b}{2}~~ \forall (i,j) \in \widetilde{E}
\end{equation}
satisfies the first two constraints. Also, it satisfies the third constraint as
\begin{equation}
\sum_{j : (i,j) \in \text{In}_{\widetilde{G}}(i)}m_{(i,j)}(i) = \frac{kb}{2}=  |\widetilde{E}| ~~ \forall i \in \widetilde{V}.
\end{equation}

If $b$ is odd, then $k$ is even and we can construct an {\it Euler tour} \cite{diestel} of $\widetilde{G}$, i.e., a cycle that traverses every edge in $\widetilde{E}$ exactly once, though it may visit a vertex any number of times.
Suppose we start at vertex $1$ and the Euler tour is of the form
\begin{equation}
\left[(1,j_1),(j_1,j_2),\hdots , (j_f,1),(1,j_{f+1}), \hdots , (j_F,u)\right]
\label{et}
\end{equation}
where $f,f+1$ are such that the Euler tour passes through the sequence of vertices $j_f \rightarrow 1 \rightarrow j_{f+1}$ and $F$ is such that the edge $(j_F,u)$ is the last edge traversed on the Euler tour.

Then we can arrange the variables $m_{(i,j)}(i) ~\forall (i,j)\in \widetilde{E}$ as a one-to-one correspondence with \eqref{et} in the following way
\begin{align}\label{et2}
& [m_{(1,j_1)}(1),m_{(1,j_1)}(j_1),m_{(j_1,j_2)}(j_1), \hdots ,\\ \nonumber
&  m_{(j_f,1)}(1), m_{(1,j_{f+1})}(1), \hdots, m_{(j_F,1)}(1) ]
\end{align}

Assigning consecutive terms in \eqref{et2} alternately as $\left \lfloor \frac{b}{2} \right \rfloor$ and $\left \lceil \frac{b}{2} \right \rceil$, we see that for every vertex $i$, the value assigned to $m_{(i_{\text{in}},i)}(i)$ is different from $m_{(i,i_{\text{out}})}(i)$, where the Euler tour progresses in the order $i_{\text{in}} \rightarrow i \rightarrow i_{\text{out}}$. We can see that such an assignment satisfies the first constraint.

Also since there are only two possible values, we conclude that there are equal number of variables $m_{(i,j)}(i)$ with values equal to $\left \lfloor \frac{b}{2} \right \rfloor$ and $\left \lceil \frac{b}{2} \right \rceil$ for all $i \in \widetilde{V}$. Hence, we have $\forall i \in \widetilde{V}$
\begin{align*}
& \sum_{j : (i,j) \in \text{In}_{\widetilde{G}}(i)}m_{(i,j)}(i)= \\
& \frac{k}{2}\left \lfloor \frac{b}{2} \right \rfloor + \frac{k}{2}\left \lceil \frac{b}{2} \right \rceil = \frac{kb}{2} = |\widetilde{E}|
\end{align*}


Next, consider a biregular bipartite simple graph $\widetilde{G}$ with $n_l$ vertices in one part and $n_r$ vertices in the other part. Each vertex in the first part has degree $d_l$ and each vertex in the other part has degree $d_r$. Since $\widetilde{G}$ is simple, we must have that $d_l \leq n_r$ and $d_r \leq n_l$. Every edge $e$ in the graph is of the form $(i_l,i_r)$ where $i_l$ is a vertex in one part while $i_r$ is a vertex in the other part. Then it is easy to see that the following assignment satisfies the constraints in \ref{opt-1}.
\begin{align*}
m_e(i_l)&=n_l, \\
m_e(i_r)&=n_r,
\end{align*}
as $\forall l,r \in \widetilde{V}$ and $\forall e \in \widetilde{E}$
\begin{align*}
m_e(l)+m_e(r) = n_l + n_r = b, \text{and} \\
\sum_{e_i : e_i \in \text{In}(l)}m_{e_i}(l)= d_l \times n_l = |\widetilde{E}|.
\end{align*}

\noindent{\it Explanation regarding Remark 2}.\\
We evaluate, over two different finite fields, the computation capacity of an example sum-network shown in Figure \ref{fig}. The structure of the sum-network is better understood with the help of the following four subsets of the source nodes.
\begin{IEEEeqnarray*}{Rl}
A_1 =&\{s_1,s_{(1,2)},s_{(1,3)},s_{(1,4)},s_\star\}, \IEEEyesnumber \label{eq:sets}\\
A_2 =& \{s_2,s_{(1,2)},s_{(2,3)}, s_\star\},\\
A_3 =& \{s_3,s_{(1,3)},s_{(2,3)},s_{(3,4)}, s_\star\},\\
A_4 =&\{s_4,s_{(1,4)},s_{(3,4)}, s_\star\}.
\end{IEEEeqnarray*}
This sum-network is a modification of the sum-network in Figure \ref{fig:sum_network_K3*}. Unlike Figure \ref{fig}, there is no edge $(s_\star, \text{tail}(e_4))$ in Figure \ref{fig:sum_network_K3*}. The inclusion of this edge is the only difference between those two sum-networks.
\begin{figure}
  \centering
  \includegraphics[scale = 0.53]{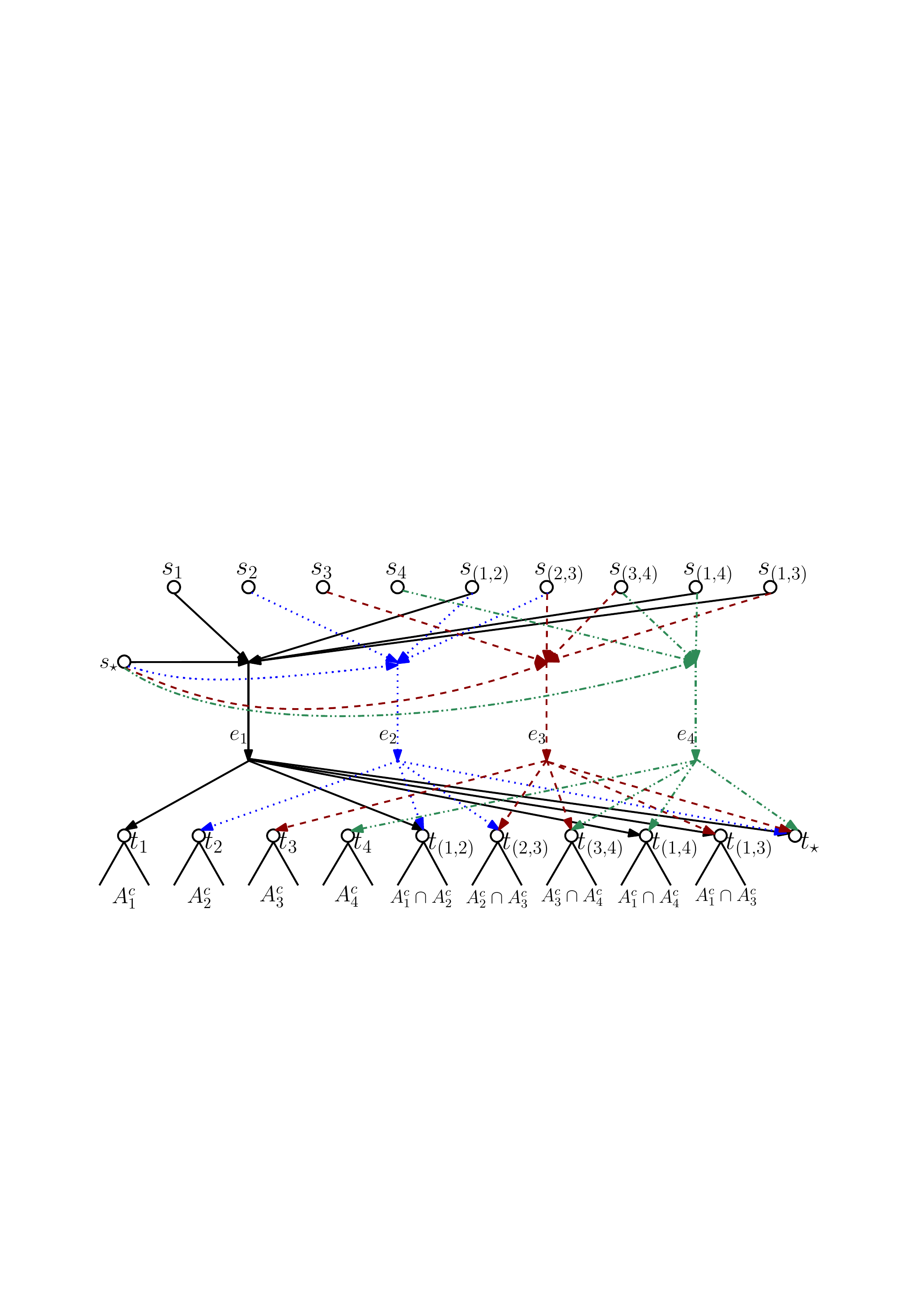}
  \caption{An example sum-network with $10$ sources and $10$ terminals. This sum-network is obtained by including an extra edge $(s_\star, \text{tail}(e_4))$ in Figure \ref{fig:sum_network_K3*}. We describe the direct edges to each terminal with the help of the sets defined in Eq. \eqref{eq:sets}. The direct edges to each terminal are indicated by the set shown below it, where superscript $c$ denotes the complement of the set.}\label{fig}
\end{figure}
Let $\mathcal{A}$ be the finite field alphabet which is used for communication.
\begin{claim}
  Suppose there is a $(m,n)$-network code that allows each terminal in the example sum-network to compute the finite field sum over $\mathcal{A}$ of all the source messages. Let ch$(\mathcal{A})$ denote the characteristic of the finite field $\mathcal{A}$. Then
  \begin{itemize}
    \item if ch$(\mathcal{A})=2, m/n \leq 4/9$, and
    \item if ch$(\mathcal{A})\neq 2, m/n \leq 4/10$.
  \end{itemize}
\end{claim}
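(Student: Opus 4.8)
The plan is to mimic the argument in Theorem~\ref{thm:upper_bd} but to track carefully where the characteristic of $\mathcal{A}$ enters, which happens precisely because the extra edge $(s_\star,\text{tail}(e_4))$ puts $X_\star$ into \emph{every} set $A_i$. First I would establish, exactly as in Lemmas~\ref{lemma:comp_1} and~\ref{lemma:comp_2}, that $t_\star$ (and each $t_e$) can compute the partial sums $\sum_{\beta\in A_i}X_\beta$ for all $i$, and the pairwise union sums $\sum_{\beta\in A_i\cup A_j}X_\beta$ for all edges $(i,j)\in\widetilde{E}$. Here $\widetilde{G}$ is $K_4$ minus an edge with the $4$-cycle $1\to2\to3\to4\to1$ playing the role of $\widetilde{Cyc}$, so the edges are $(1,2),(2,3),(3,4),(1,4)$ and the chord $(1,3)$. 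Since every $A_i$ now contains $X_\star$, for a cycle edge $e=(i,j)$ we get $A_i\cap A_j=\{X_e,X_\star\}$ and the inclusion-exclusion step yields $X_e+X_\star$; for the chord $e=(1,3)$ we have $A_1\cap A_3=\{X_{(1,3)},X_\star\}$ as well, so it \emph{also} yields $X_{(1,3)}+X_\star$ rather than $X_{(1,3)}$ alone — this is the key structural difference from Construction~2.

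Next I would have $t_\star$ form, for each vertex $k$, the combination $\sum_{\beta\in A_k}X_\beta - \sum_{e:X_e\in A_k}(X_e+X_\star)$. Vertex $k$ has degree $d_k$ in $\widetilde{G}$ (degrees are $3,2,3,2$ for $k=1,2,3,4$), and $A_k$ contributes $X_k$, the $d_k$ edge sources, and one copy of $X_\star$; subtracting $d_k$ copies of $X_e+X_\star$ leaves $X_k + (1-d_k)X_\star$. So $t_\star$ obtains the symbols $X_k+(1-d_k)X_\star$ for $k=1,\dots,4$, together with the $5$ symbols $X_e+X_\star$, $e\in\widetilde{E}$. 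The question is whether, given also $Z$, the map from the $10$ source vectors to these $1+5+4=10$ recovered vectors is invertible over $\mathcal{A}$. This is where I expect the main obstacle: one must compute the determinant (over $\mathbb{Z}$, viewed in $\mathcal{A}$) of the $10\times 10$ coefficient matrix expressing $(\{X_e+X_\star\}_e,\{X_k+(1-d_k)X_\star\}_k, Z)$ in terms of $(\{X_e\}_e,\{X_k\}_k,X_\star)$, and show it equals $\pm 2$ (or more precisely a quantity that is a unit iff $\text{ch}(\mathcal{A})\neq 2$). Concretely, adding all the recovered symbols and comparing with $Z=\sum X_k+\sum X_e+X_\star$ isolates a scalar multiple of $X_\star$: summing the five $X_e+X_\star$ and the four $X_k+(1-d_k)X_\star$ gives $\sum X_e+\sum X_k + (5 + \sum_k(1-d_k))X_\star = Z - X_\star + (5+4-10)X_\star = Z - X_\star -X_\star$, i.e. $t_\star$ can form $2X_\star$. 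Hence if $\text{ch}(\mathcal{A})\neq 2$, $t_\star$ recovers $X_\star$, then every $X_e$, then every $X_k$, so the cut bound $(q^l)^{4\alpha}\ge (q^r)^{10}$ with $\alpha=1$ gives $m/n\le 4/10$.

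For the characteristic-$2$ case the argument must be different and \emph{cannot} recover all $10$ sources. I would instead argue directly that $t_\star$'s incoming information is a function of the $9$ independent quantities $X_\star,\ \{X_e+X_\star\}_{e\in\widetilde E},\ \{X_k+X_\star\}_{k}$ — note over characteristic $2$, $X_k+(1-d_k)X_\star = X_k+(1+d_k)X_\star$ and $1+d_k$ is $0$ or... one has to be slightly careful, but the point is that $X_\star$ itself is recoverable from $2X_\star=0$ only trivially, so one shows every symbol $t_\star$ can decode lies in the span of the $9$ vectors $\{X_e+X_\star\}_e\cup\{X_k+X_\star\}_k$ together with $Z$ (which is itself in that span plus one more independent direction). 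Making this precise: the $10$ source vectors, modulo the global translation $X_\beta\mapsto X_\beta+c$ applied to all sources simultaneously, collapse to $9$ degrees of freedom, and \emph{every} partial sum over a set $A_i$ (which has the property that its complement $A_i^c$ is also available) is invariant under... no — rather, the direct edges give $t_\star$ exactly the complements, so the honest statement is that $t_\star$ sees the sum $Z$ and the bottleneck functions, and over $\text{ch}(\mathcal{A})=2$ the bottleneck functions factor through the $9$-dimensional quotient. Thus the effective source space at $t_\star$ has dimension $9$, giving $(q^l)^{4}\ge (q^r)^{9}$, hence $m/n\le 4/9$. The hard part is the bookkeeping showing that in characteristic $2$ nothing beyond these $9$ directions is decodable — i.e. exhibiting two source instantiations with different values of (say) $X_\star$ but identical bottleneck-edge functions and identical $Z$ — and dually, in odd characteristic, verifying the $10\times10$ determinant is $\pm 2$. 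I would handle both by writing the coefficient matrix explicitly for this small graph and row-reducing; the structure above (summing all recovered symbols yields $2X_\star$) is the crux and already shows the $2$ is the only obstruction.
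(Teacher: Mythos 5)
Your argument for the ch$(\mathcal{A})\neq 2$ bound is essentially the paper's: you show $t_\star$ can decode the four partial sums $P_k=\sum_{\beta\in A_k}X_\beta$ and the five quantities $X_e+X_\star$ from the bottleneck values alone, and then observe that an integer combination of these together with $\Sigma$ isolates $2X_\star$. The paper forms $T=\sum_k P_k-\sum_e(X_e+X_\star)=\sum_kX_k+\sum_eX_e-X_\star$ and computes $2^{-1}(\Sigma-T)=X_\star$; your per-vertex quantities $X_k+(1-d_k)X_\star$ are a cosmetic repackaging of the same calculation, and both routes end with ten linearly independent decodable functionals carried by $4n$ symbols, hence $4n\ge 10m$. (Your description of the network as Construction~2 run with the $4$-cycle as $\widetilde{Cyc}$ is not quite how the paper defines it --- it is the triangle-based network plus the extra edge $(s_\star,\text{tail}(e_4))$ --- but the operative consequence, $X_\star\in A_i$ for every $i$, is the same.)

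Where your plan goes astray is the ch$(\mathcal{A})=2$ case. The bound $m/n\le 4/9$ requires a \emph{lower} bound on the information that must cross the bottlenecks, i.e., that $t_\star$ decodes \emph{at least} nine linearly independent functionals of the sources. You already have exactly that from your first two steps: $P_1,\dots,P_4$ and the five $X_e+X_\star$ are linearly independent over every field (each $P_k$ is the unique functional in the list involving $X_k$, and the five $X_e+X_\star$ restricted to the edge and star coordinates form the rank-$5$ matrix $[\,I_5\mid\mathbf{1}\,]$), so $|\mathcal{A}|^{4n}\ge|\mathcal{A}|^{9m}$ follows immediately in any characteristic. This is precisely the paper's argument: it proves $4/9$ for all fields from these nine sums and only then uses ch$(\mathcal{A})\neq2$ to append $X_\star$ as a tenth. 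The ``hard part'' you identify instead --- exhibiting two source instantiations with identical bottleneck values but different $X_\star$, or showing the decodable information factors through a $9$-dimensional quotient --- proves the \emph{opposite} containment (that no more than nine directions are decodable), which is relevant to showing the $4/9$ bound is achieved by the code in the appendix, not to the converse stated in the claim. As written that step would not yield $4n\ge 9m$; deleting it and invoking your own step-2 inventory does.
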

\begin{proof}
Suppose there is a $(m,n)$-network code that allows each terminal to compute
\begin{IEEEeqnarray*}{Rl}
  \Sigma :=& X_1 + X_2 + X_3 + X_4 + X_{(1,2)} + X_{(1,3)}\\ &{}+ X_{(1,4)}+ X_{(2,3)}+X_{(3,4)} + X_\star,
\end{IEEEeqnarray*} where the addition is over $\mathcal{A}$. Then by subtracting the values of the source messages obtained over the direct edges, each terminal of the form $t_i, i \in \{1,2,3,4\}$ can obtain the sum of a subset of the source messages from the information transmitted over the edge $e_i$. The particular \textit{partial} sums recovered are listed in Table \ref{tab:t_point}.
\begin{table}
\caption{The partial sums obtained by certain terminals after they subtract the message values received over the direct edges.}\label{tab:t_point}
\renewcommand\arraystretch{1.5}
  \centering
  \begin{tabular}{cc}
    \hline
    Terminal & Partial sum \\ \hline
    $t_1$ & $X_1+X_{(1,2)}+X_{(1,3)}+X_{(1,4)}+X_\star$ \\
    $t_2$ & $X_2+X_{(1,2)}+X_{(2,3)}+X_\star$ \\
    $t_3$ & $X_3+X_{(1,3)}+X_{(2,3)}+X_{(3,4)}+X_\star$ \\
    $t_4$ & $X_4+X_{(1,4)}+X_{(3,4)}+X_\star$ \\
    \hline
  \end{tabular}
\end{table}
Let the value of the partial sum computed by the terminal $t_i, i \in \{1,2,3,4\}$ as shown in Table \ref{tab:t_point} be denoted as $P_i$. Since the terminal $t_{(i,j)}$ for any $(i,j) \in \{(1,2),(1,3),(1,4),(2,3),(3,4)\}$ receives information from both the edges $e_i$ and $e_j$, it can carry out the following operation and obtain a partial sum.
\begin{equation}\label{eq:t_edge}
  P_i + P_j - \Sigma = X_{(i,j)} + X_\star.
\end{equation}
Let us consider the terminal $t_\star$. It receives information from all the four bottleneck edges. Hence it can compute all the partial sums listed in Table \ref{tab:t_point}. It can also compute all partial sums of the form shown in Eq. \eqref{eq:t_edge}. Note that since each $X_i$ and $X_{(i,j)}$ are uniform i.i.d. over $\mathcal{A}^m$, all the partial sums obtained above are linearly independent. However, by the network code used, the maximum number of distinct values that can be transmitted across the four bottleneck edges is $(|\mathcal{A}|^n)^4$. Hence we have that
\begin{equation*}
  |\mathcal{A}|^{4n} \geq |\mathcal{A}|^{9m} \implies m/n \leq 4/9.
\end{equation*}
The above holds for any finite field $\mathcal{A}$. Now suppose ch$(\mathcal{A}) \neq 2$. Then we can obtain a tighter bound on the ratio $m/n$. In particular, terminal $t_\star$ can carry out the following operation based on the values received over $\{e_1, e_2, e_3, e_4\}$.
\begin{IEEEeqnarray*}{Rl}
  T =& P_1 + P_2 + P_3 + P_4 - (X_{(1,2)}+X_\star) - (X_{(1,3)}+X_\star) \\ &{}- (X_{(1,4)}+X_\star) - (X_{(2,3)}+X_\star) - (X_{(3,4)}+X_\star),\\
   =& X_1 + X_2 + X_3 + X_4 +X_{(1,2)}+X_{(1,3)}+X_{(1,4)}\\&{}+X_{(2,3)}+X_{(3,4)}-X_\star.
\end{IEEEeqnarray*}
Then $t_\star$ can recover the value of $X_\star$ by the operation
\begin{equation*}
  2^{-1}(\Sigma - T),
\end{equation*}where the inverse exists as $2 \neq 0$ in a finite field $\mathcal{A}$ which has ch$(\mathcal{A})\neq 2$. Thus, terminal $t_\star$ can obtain the values of ten linearly independent values in $\mathcal{A}^m$ from the information received over the bottleneck edges. Hence,
\begin{equation*}
  |\mathcal{A}|^{4n} \geq |\mathcal{A}|^{10m} \implies m/n \leq 4/10.\qedhere
\end{equation*}
\end{proof}
\subsection{Linear network codes with rate equal to the upper bound}
\begin{table*}
\caption{The function values (each in $\mathcal{A}$, with ch$(\mathcal{A})=2$) transmitted across $e_1, e_2, e_3, e_4$ in Figure \ref{fig}. This network code has rate $=4/9$. Each message $X_1, X_2, X_3, X_4, X_{(1,2)},X_{(1,3)},X_{(1,4)},X_{(2,3)},X_{(3,4)}, X_{\star}$ is a vector with $4$ components, and $\phi_1(X),\phi_2(X), \phi_3(X), \phi_4(X)$ are vectors with $9$ components each. The number inside square brackets adjoining a vector indicates a particular component of the vector. Each terminal $t_{(i,j)}$ for any $(i,j) \in \{(1,2),(1,3),(1,4),(2,3),(3,4)\}$ can recover the value of $X_{(i,j)}+X_\star$ from this network code, which is then used in computing the sum over $\mathcal{A}$.}
\label{tab:K3*_code}
\renewcommand\arraystretch{1.5}
\centering
\begin{tabular}{ccccc}
\hline\hline
Component & $\phi_1(X)$ & $\phi_2(X)$ & $\phi_3(X)$ & $\phi_4(X)$\\
\hline
$1$ to $4$ & $\sum_{s_\alpha \in A_1}X_\alpha$ & $\sum_{s_\alpha \in A_2}X_\alpha$ & $\sum_{\alpha \in A_3}X_e$ & $\sum_{s_\alpha\in A_4}X_e$\\
$5$ & $X_{(1,3)}[1]+X_\star[1]$ & $X_{(1,2)}[1]+X_\star[1]$ & $X_{1,3}[3]+X_\star[3]$ & $X_{(1,4)}[4]+X_\star[4]$\\
$6$ & $X_{(1,3)}[2]+X_\star[2]$ & $X_{(1,2)}[2]+X_\star[2]$ & $X_{1,3}[4]+X_\star[4]$ & $X_{(3,4)}[1]+X_\star[1]$\\
$7$ & $X_{(1,4)}[1]+X_\star[1]$ & $X_{(1,2)}[3]+X_\star[3]$ & $X_{2,3}[2]+X_\star[2]$ & $X_{(3,4)}[2]+X_\star[2]$\\
$8$ & $X_{(1,4)}[2]+X_\star[2]$ & $X_{(1,2)}[4]+X_\star[4]$ & $X_{2,3}[3]+X_\star[3]$ & $X_{(3,4)}[3]+X_\star[3]$\\
$9$ & $X_{(1,4)}[3]+X_\star[3]$ & $X_{(2,3)}[1]+X_\star[1]$ & $X_{2,3}[4]+X_\star[4]$ & $X_{(3,4)}[4]+X_\star[4]$\\
[1ex]
\hline
\end{tabular}
\end{table*}
Table \ref{tab:K3*_code} describes a linear network code for the example sum-network which has rate $=4/9$ if ch$(\mathcal{A})=2$.  Note that each terminal $t_{(i,j)}$ can recover the value of $X_{(i,j)}+X_\star$ from this network code. For instance, based on Table \ref{tab:K3*_code}, terminal $t_{(1,3)}$ obtains $X_{(1,3)}[1]+X_\star[1], X_{(1,3)}[2]+X_\star[2]$ from the $5$th, $6$th components of $\phi_1(X)$ respectively; and it obtains $X_{(1,3)}[3]+X_\star[3], X_{(1,3)}[4]+X_\star[4]$ from the $5$th, $6$th components of $\phi_3(X)$ respectively. The decoding function for the terminals $t_i, i \in \{1,2,3,4\}$ is immediate. For a terminal of the form $t_{(i,j)}$, its decoding procedure involves computing the following partial sum.
\begin{equation*}
\sum_{s_\alpha \in A_i} X_\alpha + \sum_{s_\alpha \in A_j} X_\alpha - (X_{(i,j)}+X_\star) = \sum_{s_\alpha \in A_i \cup A_j}X_\alpha.
\end{equation*}  The source messages not present in the above partial sum are available to $t_{(i,j)}$ through the direct edges, and hence it can compute the required sum.
We use the fact that ch$(\mathcal{A})=2$ in the decoding procedure for terminal $t_\star$. Specifically, it can carry out the operation in Eq. \ref{eq:t_star_decode}.
\begin{IEEEeqnarray*}{L}
\sum_{i=1}^{4}\sum_{s_\alpha \in A_i}X_\alpha + \left(X_{(1,2)}+X_\star\right) + \left(X_{(1,3)}+X_\star\right)\\  + \left(X_{(1,4)}+X_\star\right) + \left(X_{(2,3)}+X_\star\right) + \left(X_{(3,4)}+X_\star\right)\IEEEyesnumber\label{eq:t_star_decode}\\
=\sum_{i=1}^{4}X_i  + 3\left(X_{(1,2)}+X_{(1,3)}+X_{(1,4)}+X_{(2,3)}+X_{(3,4)}\right)\\
\qquad+ 9X_\star, \\
= \Sigma,
\end{IEEEeqnarray*} as $9 \equiv 3 \equiv 1 \mod \text{ch}(\mathcal{A})$.
\begin{table*}
\caption{The function values (each in $\mathcal{A}$, with ch$(\mathcal{A})\neq 2$) transmitted across $e_1, e_2, e_3, e_4$ in Figure \ref{fig}. This network code has rate $=4/10$. Each message $X_1, X_2, X_3, X_4, X_{(1,2)},X_{(1,3)},X_{(1,4)},X_{(2,3)},X_{(3,4)}, X_{\star}$ is a vector with $4$ components, and $\phi_1(X),\phi_2(X), \phi_3(X), \phi_4(X)$ are vectors with $10$ components each. The number inside square brackets adjoining a vector indicates a particular component of the vector. Each terminal $t_{(i,j)}$ for any $i,j \in \{(1,2),(1,3),(1,4),(2,3),(3,4)\}$ can recover the value of $X_{(i,j)}+X_\star$ from this network code, which is then used in computing the sum over $\mathcal{A}$.}
\label{tab:nw_code}
\renewcommand\arraystretch{1.5}
\centering
\begin{tabular}{ccccc}
\hline\hline
Component & $\phi_1(X)$ & $\phi_2(X)$ & $\phi_3(X)$ & $\phi_4(X)$\\
\hline
$1$ to $4$ & $\sum_{s_\alpha \in A_1}X_\alpha$ & $\sum_{s_\alpha \in A_2}X_\alpha$ & $\sum_{\alpha \in A_3}X_e$ & $\sum_{s_\alpha\in A_4}X_e$\\
$5$ & $X_{(1,3)}[1]+X_\star[1]$ & $X_{(1,2)}[1]+X_\star[1]$ & $X_{1,3}[3]+X_\star[3]$ & $X_{(1,4)}[4]+X_\star[4]$\\
$6$ & $X_{(1,3)}[2]+X_\star[2]$ & $X_{(1,2)}[2]+X_\star[2]$ & $X_{1,3}[4]+X_\star[4]$ & $X_{(3,4)}[1]+X_\star[1]$\\
$7$ & $X_{(1,4)}[1]+X_\star[1]$ & $X_{(1,2)}[3]+X_\star[3]$ & $X_{2,3}[2]+X_\star[2]$ & $X_{(3,4)}[2]+X_\star[2]$\\
$8$ & $X_{(1,4)}[2]+X_\star[2]$ & $X_{(1,2)}[4]+X_\star[4]$ & $X_{2,3}[3]+X_\star[3]$ & $X_{(3,4)}[3]+X_\star[3]$\\
$9$ & $X_{(1,4)}[3]+X_\star[3]$ & $X_{(2,3)}[1]+X_\star[1]$ & $X_{2,3}[4]+X_\star[4]$ & $X_{(3,4)}[4]+X_\star[4]$\\
$10$ & $X_\star[1]$ & $X_\star[2]$ & $X_\star[3]$ & $X_\star[4]$\\
[1ex]
\hline
\end{tabular}
\end{table*}

If ch$(\mathcal{A}) \neq 2$, then the upper bound is $4/10$ and the network code used is as shown in Table \ref{tab:nw_code}. Note that this network code has the same values for its first nine components as in the previous network code in Table \ref{tab:K3*_code}. The decoding for all terminals except $t_\star$ is the same as in the previous case. However the decoding procedure for terminal $t_\star$ has to be different as the operation done in Eq. \ref{eq:t_star_decode} does not return the value $\Sigma$ as both coefficients $3,9$ are not equal to $1$ when ch$(\mathcal{A})\neq 2$. From Table \ref{tab:nw_code}, the $10$th component transmitted along each bottleneck edge is a distinct component of the source message $X_\star$. Since the terminal $t_\star$ receives information from each bottleneck edge, it can thus obtain the value of $X_\star$. Since $t_\star$ can also find the value of $X_{(i,j)}+X_\star$ based on the network code, it can recover the value of the message $X_{(i,j)}$. Thus, all the summands in $\sum_{s_\alpha \in A_i}X_\alpha$ except $X_i$ are known to $t_\star$ for all $i\in \{1,2,3,4\}$; moreover, it knows the value of the sum $\sum_{s_\alpha \in A_i}X_\alpha$ from the first four components of the network code. Thus $t_\star$ can obtain the value of each source message using the network code and thus can compute the required sum.


%
%
%
\bibliographystyle{IEEEtran}
\bibliography{tip,career_bib}

\end{document}